\definecolor{commentcolor}{rgb}{0,0,0}
\crefname{hypothesis}{Hypothesis}{Hypotheses}
\title{Sparse Tensor Product Approximation for a class of Generalized Method of Moments Estimators}
\author{Alexandros Gilch\thanks{Institute for Numerical Simulation, Bonn, Germany, 	\url{https://ins.uni-bonn.de/group/griebel/profile}
		(\email{gilch@ins.uni-bonn.de}, \email{griebel@ins.uni-bonn.de}, \email{oettershagen@ins.uni-bonn.de}).}
	\and Michael Griebel\footnotemark[1] \thanks{Fraunhofer Institute for Algorithms and Scientific Computing SCAI, Sankt Augustin, Germany, \url{https://www.scai.fraunhofer.de/en.html}}
	\and Jens Oettershagen\footnotemark[1]}
\renewcommand{\th}{\theta}
\DeclareMathOperator{\R}{\mathbb{R}}
\DeclareMathOperator{\N}{\mathbb{N}}
\DeclareMathOperator*{\argmax}{\operatorname{argmax}}
\DeclareMathOperator{\ra}{\operatorname{\rightarrow}}
\DeclareMathOperator{\E}{\operatorname{E}}
\DeclareMathOperator{\Normal}{\operatorname{\mathcal{N}}}
\newcommand{\OG}{G_\infty}
\newcommand{\OQ}{R_\infty}
\newcommand{\SobolevspaceI}{H_{\text{mix}}^{r_i}}
\DeclareMathOperator{\Productspace}{\Omega_1\times\Omega_2}
\DeclareMathOperator{\Bochnerspace}{\mathcal{H}_1\left(\Omega_1,\mathcal{H}_2(\Omega_2;\mu);\nu\right)}
\DeclareMathOperator{\OM}{\Omega_1}
\DeclareMathOperator{\OMM}{\Omega_2}
\DeclareMathOperator{\Int}{\mathcal{I}}
\newcommand{\QFG}{Q_{\infty,L}^\sigma}
\newcommand{\QSG}{Q_{1,L}^\sigma}
\newcommand{\NFG}{N_{\infty,L}^\sigma}
\newcommand{\NSG}{N_{1,L}^\sigma}
\newcommand{\ESG}{E_{1,L}^\sigma}
\newcommand{\EFG}{E_{\infty,L}^\sigma}
\DeclareMathOperator{\innerintegrand}{\varphi}
\DeclareMathOperator{\polylog}{\operatorname{Li}}
\newcommand{\Quad}{Q^i}
\newcommand{\QuadI}{Q^1}
\newcommand{\QuadII}{Q^2}
\newcommand{\DQuad}{\Delta^i}
\newcommand{\DQuadI}{\Delta^1}
\newcommand{\DQuadII}{\Delta^2}
\DeclareMathOperator{\LSFG}{\operatorname{\mathcal{A}^\sigma_\infty}}
\DeclareMathOperator{\LSSG}{\operatorname{\mathcal{A}^\sigma_1}}
\newcommand{\Qi}{Q^i_{l_i}}
\newcommand{\Ni}{N_{i,l_i}}
\newcommand{\gFGopt}{\gamma_\infty^*}
\newcommand{\gSGopt}{\gamma_1^*}
\newcommand{\wni}{w_{n_i}^{i,l_i}}
\DeclareMathOperator{\arcsinh}{arcsinh}
\DeclareMathOperator{\arctanh}{arctanh}
\DeclareMathOperator{\erf}{erf}
\begin{document}

	\maketitle
	
	\begin{abstract}
		Generalized Method of Moments (GMM) estimators in their various forms, including the popular Maximum  Likelihood (ML) estimator, are frequently applied for the evaluation of complex econometric models with not analytically computable moment or likelihood functions. As the objective functions of GMM- and ML-estimators themselves constitute the approximation of an integral, more precisely of the expected value over the real world data space, the question arises whether the approximation of the moment function and the simulation of the entire objective function can be combined. 
		\\
		Motivated by the popular Probit and Mixed Logit models, we consider double integrals with a linking function which stems from the considered estimator, e.g. the logarithm for Maximum Likelihood, and apply a sparse tensor product quadrature to reduce the computational effort for the approximation of the combined integral. Given H\"older continuity of the linking function, we prove that this approach can improve the order of the convergence rate of the classical GMM- and ML-estimator by a factor of two, even for integrands of low regularity or high dimensionality. This result is illustrated by numerical simulations of Mixed Logit and Multinomial Probit integrals which are estimated by ML- and GMM-estimators, respectively.
	\end{abstract}
	
	\begin{keywords}
		Numerical Integration, Sparse Grids, Generalized Method of Moments, Multilevel Estimation, Maximum Likelihood, Maximum Simulated Likelihood, Optimal Weights Cubature, Discrete Choice Models
	\end{keywords}
	
	\begin{AMS}
		62P20, 65D30, 65D32
	\end{AMS}
	
	\section{Introduction}
		In the past decades, econometric models and the corresponding parametrizations and estimators have become increasingly challenging in terms of mathematical and computational complexity. In this context, Generalized Methods of Moments estimators and especially the Maximum Likelihood estimator serve as reliable tools to validate theoretically developed models against real world data. Both estimators can be written via an optimization principle as extremum estimators and are therefore determined by an objective function $R_N$ which needs to be maximized. For parameterized models with finite-dimensional parameter vector $\th$, the extremum estimator $\hat{\th}$ is defined as
		\begin{align} \label{ExtremumEstimator}
			\hat{\th}:=\argmax_{\th\in\Theta}\ R_N(\th)\,,
		\end{align}
		where $\Theta\subset\R^q$ denotes the examined parameter space.
		\\
		The exact form of $R_N$ is subject to the model formulation and incorporates the observed data points $z_1,...,z_N\in\mathcal{Z}$. The real world data space $\mathcal{Z}$ contains all possible data points in the considered econometric model which are distributed w.r.t.\ the probability distribution $\nu$ and is hence a subset of a finite-dimensional vector space. 
		\\
		\textcolor{commentcolor}{Given $\Theta$ and $\mathcal{Z}$, the GMM estimator is theoretically defined as the solution of the estimating equations
		\begin{align*}
			0=\E_z[m(z|\th)]=\int_{\mathcal{Z}}m(z|\th)d\nu(z)
		\end{align*}
		for some \textit{moment function} $m:\mathcal{Z}\times\Theta\ra\R^q$}. As $\nu$ is unknown, this integral can usually not be solved analytically so it is approximated by a sum over the observations
		\begin{align} \label{Objective GMM}
			G_N(\th)=\frac{1}{N}\sum_{n=1}^{N}m(z_n|\th)\,.
		\end{align}
		which gives rise to the GMM objective function in (\ref{ExtremumEstimator}) by
		\begin{align} \label{Objective ML}
			R^\text{GMM}_N(\th):=-||G_N(\th)||\,.
		\end{align}
		The observations are assumed to be random samples from a real world distribution and hence $G_N$ can be interpreted as Monte Carlo simulation of an expected value over the real world data space $\mathcal{Z}$,
		\begin{align} \label{IntegratedEstimator}
			G_N(\th)\approx G_\infty:=\int_{\mathcal{Z}}m(z|\th)d\nu(z)\,.
		\end{align}
		In particular, the approximation error is the same as for classical MC simulations, i.e. it is of order $O(N^{-1/2})$ in expectation, provided that the variance of $m$ is finite. Hence a high number of observations $N$ is required to obtain a well-behaved estimator.
		\\
		\textcolor{commentcolor}{The ML estimator is derived directly as maximizer in the sense of (\ref{ExtremumEstimator}), where 
		\begin{align*}
			R^\text{ML}_N(\th):=G_N(\th)
		\end{align*}
		is called the \textit{log-likelihood function} and $m(z_n|\th):=\log(f(z_n|\th))$ for a problem-specific function $f$. Here, $G_N$ is understood as the logarithm of the \textit{likelihood} given the data $z_1,...,z_N$ and the parameter $\th$, i.e.
		\begin{align*}
			\mathcal{L}(z_1,...,z_N)=\prod_{n=1}^{N}f(z_n|\th)\,,
		\end{align*}
		which one would like to maximize w.r.t. $\th$. Hence, this fits the setting of extremum estimation, providing us the same intuition of $G_N$ for GMM and ML estimators as sample analogs or MC simulations of the integral (\ref{IntegratedEstimator}). For a thorough introduction of GMM and ML estimators we refer to the textbooks by Hayashi \cite{hayashi2000econometrics} and Newey and McFadden \cite{NeweyMcFadden1994}.}
		\\
		While models are often designed to yield closed form objective functions, in many areas of economic research such formulations are not possible and the resulting expressions do not have a closed form. Important examples are stochastic dynamic models, where multidimensional partial differential equations need to be solved \cite{FERNANDEZVILLAVERDE2016}, \cite{KruegerKubler2004}, \cite{Maliar2014} or large state spaces have to be searched \cite{keane1994solution}, \cite{keane2011structural} and Generalized Linear Mixed models, which require the computation of multidimensional integrals \cite{GourierouxMonfort1997}, \cite{Train2009}.
		\\
		To this end, simulation or Monte Carlo (MC) techniques are the usual choice for such approximation problems since they are quite robust and fairly easy to implement.
		In particular they do not suffer from the curse of dimensionality: For tensor products of one-dimensional approximation rules, the computational costs of interpolation and integration grow exponentially in the dimensionality of the problem. Consequently, the convergence rate deteriorates exponentially with rising dimension $d$, leading to an infeasibility of models with many individuals, countries, choices, etc.. In contrast, MC sampling only provides a probabilistic convergence rate of order $O(N^{-1/2})$ for $N$ interpolation or quadrature nodes, which is however basically independent of $d$.
		\\
		In addition, numerical mathematics offers tools for handling multidimensional problems and circumventing the curse of dimensionality to some extent such as Quasi Monte Carlo (QMC) and Sparse Grid (SG, also known as Smolyak) methods. Other than MC, QMC and SG methods are based on deterministically computed interpolation or quadrature nodes and hence yield deterministic convergence rates. Given sufficient regularity of the approximated functions or integrands, i.e. involving bounded mixed derivatives, they achieve algebraic or even exponential convergence rates and thus can accelerate the evaluation of econometric models significantly. The curse of dimensionality now only appears in logarithmic terms of $N$ in the convergence rates or disappears even completely for certain anisotropic smoothness classes. For further details on SG methods see \cite{BungartzGriebel2004}, \cite{GerstnerGriebel1998}.
		\\
		Previous attempts to utilize the strength of modern numerical methods in econometrics support this prospect: Kr\"uger, K\"ubler and Malin \cite{KruegerKubler2004}, \cite{Malin2011} were the first to apply a Sparse Grid technique to find the dynamic equilibrium in an overlapping generations model by global polynomial interpolation. Similar approaches were adopted by Judd et al. \cite{Judd2014} and Winschel and Kraetzig \cite{WinschelKraetzig2010} for stochastic growth models. More recently, Brumm and Scheidegger \cite{BrummScheidegger2017} implemented an adaptive SG rule to compute global solutions of an International Real Business Cycle model.
		\\
		In another line of research, QMC and SG quadrature rules have been applied to approximate expected values and cumulative distribution functions. Such integrals often appear in likelihood functions when an unobservable variable is integrated out. Prominent examples are Probit and Mixed Logit models: While earlier works considered only one-dimensional rules \cite{butler1982computationally} and various simulation methods \cite{Geweke1998MC}, \cite{hajivassiliou1994classical}, Bhat \cite{Bhat2001} was the first to use QMC rules for numerical quadrature of a Mixed Logit model. Later, Heiss and Winschel \cite{Heiss2010}, \cite{HeissWinschel2008}, Judd and Skrainka \cite{JuddSkrainka2011} and Abay \cite{Abay2014} investigated the benefits of basic SG rules for the approximation of Panel Probit and multinomial Mixed Logit models. Griebel and Oettershagen \cite{GriebelOettershagen2014} developed an extension of SG quadrature which allows the numerical integration of boundary singularities and achieved exponential convergence for the Probit integral.
		\\
		In this paper, we consider objective functions for GMM- and ML-estimators where the function $m$ includes the approximation of an integral i.e.
		\begin{align} \label{MomentFunction}
			m(z,\th)=F\left(z,\th,\int_{\mathcal{U}}\varphi(z,u|\th)d\mu(u)\right)
		\end{align}
		for some functions $F:\mathcal{Z}\times\Theta\times\R\ra\R$ and $\varphi:\mathcal{Z}\times\mathcal{U}\times\Theta\ra\R$. Here, $\mathcal{U}$ is usually the domain of an unobserved error variable which is integrated out, i.e. $\mathcal{U}=\R^d$ for some $d\in\N$ with $\mu$ being the corresponding probability measure. 
		\\
		In this case, the ML-objective function can be simply derived by choosing $F(z,\th,t)=\log(t)$. \textcolor{commentcolor}{C}ombined with the integrated form of $R_N$ we obtain a double integral
		\begin{align} \label{DoubleIntegral}
			G_\infty(\th)=\int_{\mathcal{Z}}F\left(z,\th,\int_{\mathcal{U}}\varphi(z,u|\th)d\mu(u)\right)d\nu(z)
		\end{align}
		with linking function $F$. In this context, it becomes apparent that an improved accuracy of the approximation of the inner integral cannot compensate the sampling error that is inherited from the outer approximation. Instead, one needs to balance the inner approximation error with the outer sampling error in order to achieve the optimal error with the least possible computational effort. Griebel et al. \cite{GriebelOettershagen2019} recently gave an overview over the respective balancings for various quadrature rules including MC, QMC and SG rules.
		\\
		Moreover, the double integral (\ref{DoubleIntegral}) can be interpreted as a certain integral on the tensor product space $\mathcal{Z}\times \mathcal{U}$. Harbrecht and Griebel \cite{GriebelHarbrecht2011} investigated interpolation in such tensor product spaces and constructed a \textit{sparse tensor product space} based on the sparse grid method. Sparse tensor product spaces are used e.g. in \cite{GriebelHarbrechtMulterer2015} where elliptic PDEs are solved with quadrature on $\mathcal{Z}$ and interpolation on $\mathcal{U}$. Heinrich \cite{Heinrich2001} and Giles \cite{giles2015multilevel} proposed a similar approach called \textit{Multilevel Monte Carlo} method solely for Monte Carlo simulations of stochastic and deterministic PDEs and integrals, which are not analytically solvable. As first observed in \cite{GerstnerHeinz2013}, this approach just resembles the sparse tensor product approximation on the product of the integral space and the parameter space.
		\\
		We present in this paper a sparse tensor product quadrature rule which combines the inner approximation and the sampling by observations of the outer integral. The joint consideration of the two integration domain can be viewed as special form of a generalized sparse grid quadrature rule. We show that H\"older continuity of the linking function $F$ suffices to extend classical convergence results from sparse grid theory to sparse tensor product quadrature. In particular, we prove that the order of the convergence rate of an optimally balanced full tensor product rule \cite{GriebelOettershagen2019} can be improved by a factor of up to two with a properly chosen sparse tensor product approach. This is a substantial increase which allows to efficiently treat much more complicated problems in practice than just by a (well-balanced) quadrature rule as in \cite{GriebelOettershagen2019}.
		\\
		The remainder of this paper is organized as follows: In section \ref{MLE-setup}, we establish an integral representation of the objective function of GMM estimators and introduce associated notational conventions. In section \ref{MLE-theory} and inspired by this formulation, we develop a sparse tensor product quadrature and present results on convergence rates for functions from mixed regularity Sobolev spaces. In section \ref{DCM}, we present three exemplary econometric models from Discrete Choice theory which indeed possess the double integral structure introduced in (\ref{DoubleIntegral}). In section \ref{MLE-numerics}, we underscore our findings with numerical results for the previously established Mixed Logit model evaluated by Maximum Likelihood and the multinomial Probit model evaluated by a GMM-estimator. Finally, we give some concluding remarks in section \ref{Conclusion}.
		
	\section{Setup} \label{MLE-setup}
		We now consider the general functional
		\begin{align}\label{IntGMM}
			\OG(\th)=\int_{\mathcal{Z}}F\left(z,\th,\int_{\mathcal{U}}\varphi(z,u|\th)d\mu(u)\right)d\nu(z)\,.
		\end{align}
		Then, the asymptotic GMM estimator is given by
			\begin{align*}
				\OQ(\th)=||\OG(\th)||^2\,,
			\end{align*}
		where the function $F:\mathcal{Z}\times\Theta\times\R\ra\R$ is defined by the chosen moments $m$ and the variable $u$ often represents unobservable variables or errors in measurement. GMM-estimators are fairly general and also include Maximum Likelihood estimation as special case. It can simply be obtained with the linking function $F(z,\th,t)=\log(t)$.
		\\
		In the following, we assume that 
		\begin{enumerate}[(I)]
			\item $\varphi$ is $\mu$-integrable in $u$ for all $\theta\in\Theta$ and for $\nu$-almost all $z$ and
			\item $F$ is $\nu$-integrable in $z$ for all $\theta\in\Theta$.
		\end{enumerate}
		From now on, we omit the dependence on $\th$ as the integral is taken into account separately for each $\th$ and write (\ref{IntGMM}) with $\OM:=\mathcal{Z}$ and $\OMM:=\mathcal{U}$ more generally as
		\begin{align}
			\Int_1 (F_{\innerintegrand})&:=\int_{\mathcal{\OM}}F_{\innerintegrand}(z)d\nu(z)\,, \label{Int1}
			\\ 
			\Int_2 (\innerintegrand,z)&:=\int_{\mathcal{\OMM}}\innerintegrand(z,u)d\mu(u) \label{Int2}
		\end{align}
		for domains $\Omega_i\subset\R^{d_i}$, a $\nu$-integrable function $F_{\innerintegrand}:\OM\ra\R$ and a function $\innerintegrand:\Productspace\ra\R$ , where $\innerintegrand(z,\cdot)$ is $\mu$-integrable for every $z\in\OM$. We write $F_{\innerintegrand}$ to indicate that we consider functions $F$ which always include the computation of the integral $\Int_2$ but might also depend on $z$ in a direct way. We express this dependence via
		\begin{align*}
			F_{\innerintegrand}(z)=F(z,\th,\Int_2(\innerintegrand,z))\,.
		\end{align*}
		In order to apply quadrature rules like SG or QMC, certain regularity conditions have to be imposed on the considered integrands, whereas square-integrability is sufficient for MC quadrature. Since we examine integration over $z$ and $u$ separately at first, we need to determine separate regularity conditions. \textcolor{commentcolor}{Assume that $\mathcal{H}^1$ and $\mathcal{H}^2$ are separable Hilbert spaces.} We let $F(\cdot,\th,\cdot)\in\mathcal{H}_1(\OM\times\R,\nu)$ for every $\th\in\Theta$ and also $\Int_2(\innerintegrand,\cdot)\in\mathcal{H}_1(\OM,\nu)$ since this term is required to assemble $F_{\innerintegrand}$. In terms of the inner integral, we let $\innerintegrand(z,\cdot)\in\mathcal{H}_2(\OMM,\nu)$ for every $z\in\OM$. The assumptions on $\innerintegrand$ can be summarized to
		\begin{align*}
			\innerintegrand\in\mathcal{H}_1(\OM,\nu)\otimes\mathcal{H}_2(\OMM,\mu)\,,
		\end{align*}
		i.e. the inner integrand needs to have sufficient regularity in both domains, while the outer integrand $F_{\innerintegrand}$ only requires regularity in $z$. This resembles a mixed regularity assumption for $\innerintegrand$. \textcolor{commentcolor}{Here, $\mathcal{H}_1(\OM,\nu)\otimes\mathcal{H}_2(\OMM,\mu)$ denotes the tensor product of the two Hilbert spaces involving the usual metric space completion.}
		\\
		Two reasonable choices for $\mathcal{H}_1$ and $\mathcal{H}_2$ are to let them be $L^2$, the space of square-integrable functions, or $\SobolevspaceI$, the Sobolev space of mixed regularity for (possibly) different $r_i$, $i=1,2$, as both those spaces conform with the previously mentioned quadrature rules. Of course, other choices like classical Sobolev spaces are also possible.
		\\
		Another approach of fitting our problem into proper function spaces includes the rewriting of $\innerintegrand$ as $\innerintegrand(z,u)=\innerintegrand(z)(u)$. Then, we can view (\ref{Int1}) and (\ref{Int2}) for fixed $F$ as single integration problem of an integrand in
		\begin{align*}
			\mathcal{H}^F:=\Bigg\{\varphi\in\Bochnerspace\text{ s.t. }F\left(z,\th,\int_{\OMM}\innerintegrand(z)(u)d\mu(u)\right)\in \mathcal{H}_1(\OM,\nu)\Bigg\}\,.
		\end{align*}
		For $F_{\innerintegrand}(z)=\tilde{F}(|\Int_2|)$, i.e. the dependence on $z$ is solely via the integral $\Int_2$, this definition resembles the so-called \textit{Orlicz-Bochner space}. 
		If $\tilde{F}$ would be a so-called \textit{Young function}, see \cite{rao1991theory}, i.e. $\tilde{F}:\R\ra[0,\infty)$ is convex and lower semi-continuous with $\lim_{s\ra\infty}\frac{\tilde{F}(s)}{s}=\infty$ and $\lim_{s\ra 0}\frac{\tilde{F}(s)}{s}=0$, then the Orlicz-Bochner space is defined as 
		\begin{align*}
			L^{\tilde{F}}(\Omega,X):=\left\{\innerintegrand:\Omega\ra X \text{ measurable and }\exists\alpha>0:\int_{\Omega}\tilde{F}\big(\alpha||\innerintegrand(z)||_X\big)d\mu(z)<\infty\right\}\,.
		\end{align*}
		Setting $\Omega=\OM$ and $X=\mathcal{H}_2(\OMM,\mu)$ we get $\mathcal{H}^{\tilde{F}}\subset L^{\tilde{F}}$. This notation can be further extended to include weakly differentiable functions $\innerintegrand$, see \cite{AdamsFournier2003}.
		\\
		But the assumption that $F$ is a Young function and hence convex is too strong for the general form (\ref{IntGMM}). Already the case $F(z,\th,t)=\log(z)$ does not satisfy these conditions rendering a further investigation of Orlicz-Bochner spaces inadequate as Maximum Likelihood estimation is the most relevant application of our theoretical results. \textcolor{commentcolor}{We therefore proceed by examining $F_{\innerintegrand}$ and $\innerintegrand\in\mathcal{H}_1(\OM,\nu)\otimes\mathcal{H}_2(\OMM,\mu)$ and $F(\cdot,\th,\cdot)\in\mathcal{H}_1(\OM\times\R,\nu)$ for every $\th\in\Theta$ as already discussed before}.

	\section{Sparse Tensor Product Quadrature} \label{MLE-theory}
		We now intend to approximate the integrals $\Int_1$ and $\Int_2$ from (\ref{Int1}) and (\ref{Int2}) while making use of the nested structure of the overall integration problem on $\Productspace$. To this end, we first consider integration and quadrature on each domain $\Omega_i$, $i=1,2$, separately. Then, we combine these quadratures appropriately. In the simplest case this is done in a product-type fashion, e.g. in a benefit-cost balanced way as already presented in \cite{GriebelOettershagen2019}, which leads to a so-called full tensor product (FTP) rule on $\Productspace$. 
		\\
		Moreover, we can exploit a multilevel hierarchy of quadratures on each $\Omega_i$, $i=1,2$, and can build, following the sparse grid idea \cite{BungartzGriebel2004}, a sparse tensor product (STP) rule on $\Productspace$, which relates to the multilevel quadrature approach \cite{giles2015multilevel}. Indeed, such a STP-quadrature was already presented and analyzed for the plain integration and interpolation problem on $\Productspace$ in \cite{GriebelHarbrecht2011}, \cite{GriebelHarbrechtMulterer2015}, i.e. for the simple case $F(z,\th,t)=t$ and $\innerintegrand\in\mathcal{H}_1(\OM,\nu)\otimes\mathcal{H}_2(\OMM,\mu)$ \textit{without} any intermediate function $F$. 
		\\
		We will now generalize these results to the case of quite general integrable functions $F$, which are additionally H\"older continuous. This covers a wide range of practical applications from econometrics involving especially GMM and ML estimators. Thus, as building blocks, we assume to separately have on each $\Omega_i$, $i=1,2$, a sequence of $d_i$-dimensional quadrature rules
		\begin{align*}
			\Qi(f):=\sum_{n_i=1}^{\Ni}w_{n_i}^{i,l_i}f(x_{n_i}^{i,l_i})
		\end{align*}
		for some integrand $f:\Omega_i\ra\R$.
		\\
		There are various types of such quadrature rules such as MC, QMC, Frolov, sparse grid (SG) or product rules, among others. Their properties are expressed in their costs/degrees of freedom and their convergence rate, which depends on the regularity and smoothness of the respective integrand and thus on the underlying function space on $\Omega_i$, $i=1,2$. \textcolor{commentcolor}{To this end, we assume that $\Qi$ involves 
	\begin{align}\label{size QMC}
		\Ni \asymp 2^{l_i}	
	\end{align}
nodes in $d_i$ dimensions if it is a MC, QMC or Frolov rule. Here and in the following, the notation $\asymp$  will be short for the existence of $l_i$-independent constants $c,C>0$ such that 
$ c \, 2^{l_i}\leq \Ni \leq C \, 2^{l_i}$ for all $ l_i \geq 0$.
		If only the upper bound holds, we use the notation $\lesssim$. Moreover we assume that $\Qi$ involves\footnote{\textcolor{commentcolor}{The constants $c,C$ involved in the $\asymp$- or $\lesssim$-notation might depend on the dimensionality $d_i$.}}
		\begin{align} \label{size SG}
			\Ni \asymp 2^{l_i}l_i^{d_i-1}
		\end{align}
		nodes in $d_i$ dimensions if it is a sparse grid rule \cite{DuTeUl2015}}. Of course, for a direct product rule, we would have
		\textcolor{commentcolor}{
		\begin{align} \label{size product}
			\Ni\asymp 2^{d_il_i}
		\end{align}
		}
		and we encounter the curse of dimension with respect to $d_i$. 
		\\
		The error of a quadrature rule and thus its convergence rate depends on the involved number of nodes but also on the considered function class on $\Omega_i$ and the dimension $d_i$. In the following, we assume that the $\Qi$, $i=1,2$, have degree of exactness $p_i$, respectively. Then they have error bounds of the form
		\begin{align} \label{ErrorBounds Qi}
			e\left(\Qi,\SobolevspaceI\right)=O\left(\Ni^{-s_i}\log(\Ni)^{t_i}\right)
		\end{align}
		for $s_i:=\min\{r_i,p_i\}$. Here, we take the mixed Sobolev space $\SobolevspaceI(\Omega_i)$ of $r_i$-th bounded mixed derivatives as suitable function space for (Q)MC, Frolov and SG quadratures into consideration, where in general $r_i=0$ (with  $H^0_\text{mix}(\Omega_i)=L^2(\Omega_i)$) for MC quadrature, $r_i=1$ for QMC rules and $r_i\geq 1$ for Frolov and SG quadratures is assumed. The $\log$-exponent in (\ref{ErrorBounds Qi}) is of the general form $t_i=t_i(s_i,d_i)$ to include all of the previously introduced rules. Alternatively, for the isotropic Sobolev case and the product rule, we assume that
		\begin{align} \label{ErrorBounds Qi iso}
			e\left(\Qi,H^{r_i}\right)=O\left(\Ni^{-s_i/d_i}\right)\,.
		\end{align}
		This way, the order of convergence $s_i$ incorporates both the smoothness $r_i$ of the considered integrands and the maximal degree of exactness $p_i$ of $\Qi$ into the bounds (\ref{ErrorBounds Qi}) and (\ref{ErrorBounds Qi iso}). In particular, given an $r_i$-times differentiable integrand, one would like to choose a quadrature rule with degree of exactness at least $r_i$ in order to maximize the order of convergence. For the product rule for functions in the isotropic Sobolev space, the additional division by $d_i$ induces the curse of dimensionality, where the order of convergence decreases drastically for higher dimensional integrals.
		\\
		Now, we define difference quadrature formulae by
		\begin{align*}
			\DQuadI_{l_1}(F_{\innerintegrand})&:=
			\begin{cases}
			\QuadI_{l_1}(F_{\innerintegrand})-\QuadI_{l_1-1}(F_{\innerintegrand})\,,&\text{ for }l_1\geq 2\,,\\
			\QuadI_1(F_{\innerintegrand})\,,&\text{ for }l_1=1\,,
			\end{cases}\\
			\DQuadII_{l_2}(\innerintegrand,z)&:=
			\begin{cases}
			F(z,\th,\QuadII_{l_2}(\innerintegrand,z))-F(z,\th,\QuadII_{l_2-1}(\innerintegrand,z))\,,&\text{ for }l_2\geq 2\,,\\
			F(z,\th,\QuadII_1(\innerintegrand,z))\,,&\text{ for }l_2=1\,.
			\end{cases}
		\end{align*}
		This allows for telescopic expansions of $\QuadI_{l_1}$ and $F(z,\th,\QuadII_{l_2}(\innerintegrand,z))$ for any $z\in\OMM$. In particular, we can sum up $\DQuad_{l_i}$ over $l_i\in\N$, $i=1,2$, and get series representations of $\Int_1$ and $F_{\innerintegrand}(z)$ resulting in
		\begin{align}
			\Int_1(F_{\innerintegrand})
			=\sum_{l_1=1}^{\infty}\DQuadI_{l_1}(F_{\innerintegrand})=\sum_{l_1=1}^{\infty}\DQuadI_{l_1}\left(\sum_{l_2=1}^{\infty}\DQuadII_{l_2}(\innerintegrand,\cdot)\right) 
			=\sum_{(l_1,l_2)\in\N^2}\DQuadI_{l_1}\circ \DQuadII_{l_2}(\innerintegrand,\cdot)\,.\label{integral-expansion}
		\end{align}
		Here, the ``$\cdot$" serves as placeholder for the quadrature nodes $z_{m,l_1}$ defined by each difference quadrature rule $\DQuadI_{l_1}$, \textcolor{commentcolor}{and $\DQuadI_{l_1}\circ \DQuadII_{l_2}$ is the concatenation of both operators \textcolor{commentcolor}{which is in general not commutative}. Note here that this allows us later to estimate $|\DQuadI_{l_1}\circ \DQuadII_{l_2}(\innerintegrand,\cdot)|\leq||\DQuadI_{l_1}||\cdot|\DQuadII_{l_2}(\innerintegrand,\cdot)|$}.
		\\
		For a general level index set $\mathcal{A}\subset\N^2$, we then obtain the \textit{general sparse grid quadrature rule} $Q_\mathcal{A}$ (STP) on $\Productspace$ by properly truncating the above sum, i.e.
		\begin{align} \label{GeneralSG}
			\Int_1(F_{\innerintegrand})\approx Q_\mathcal{A}(F_{\innerintegrand}):=\sum_{(l_1,l_2)\in\mathcal{A}}\DQuadI_{l_1}\circ \DQuadII_{l_2}(\innerintegrand,\cdot)\,.
		\end{align}
		For our considerations, we use the basic anisotropic SG index set 
		\begin{align*}
			\LSSG(L):=\{(l_1,l_2)\ :\ \sigma l_1+\frac{l_2}{\sigma}\leq L\}
		\end{align*}
		and compare it to the basic anisotropic full grid set 
		\begin{align*}
			\LSFG(L):=\{(l_1,l_2)\ :\ \max\{\sigma l_1,\frac{l_2}{\sigma}\}\leq L\}\,.
		\end{align*}
		Here, the parameter $\sigma>0$ accounts for different convergence rates of the inner and the outer quadrature and ``balances" them properly.
		\\
		We write $\QFG$ for the level-$L$-FTP rule with index set $\LSFG(L)$ in (\ref{GeneralSG}) and we write $\QSG$ for the level-$L$-STP rule for the index set $\LSSG(L)$ in (\ref{GeneralSG}) and define the corresponding errors as
		\begin{align}
			\EFG(F_{\innerintegrand})&:=|\Int_1(F_{\innerintegrand})-\QFG(F_{\innerintegrand})|\,,\\
			\ESG(F_{\innerintegrand})&:=|\Int_1(F_{\innerintegrand})-\QSG(F_{\innerintegrand})|\,.
		\end{align}
		First, we count the number of nodes in $\QFG$ and $\QSG$. Since $F$ does not affect the number $\NFG$ and $\NSG$ of associated nodes we can adapt Theorem 4.1 from \cite{GriebelHarbrecht2011} and adjust it for the additional case where the single quadrature rules $\Qi$, $i=1,2$, might be SG rules themselves.
		\begin{theorem}\label{TPsize} (Size of full and sparse tensor product quadrature)\\
			Let $\Qi$ on $\Omega_i$, $i=1,2$, be MC, QMC, Frolov or SG quadrature rules (i.e. with number of nodes as in (\ref{size QMC}) or (\ref{size SG}) and error bound as in (\ref{ErrorBounds Qi}). Then, the full tensor product rule $\QFG$ has $\NFG$ nodes where
			\textcolor{commentcolor}{
			\begin{align*}
				\NFG \asymp 2^{L(\sigma+\frac{1}{\sigma})}L^{q_1+q_2}\,,
			\end{align*}
			where $q_i=d_i-1$ for sparse grid rules and $q_i=0$ otherwise.}
			\\
			The sparse tensor product rule $\QSG$ has $\NSG$ nodes \textcolor{commentcolor}{where
			\begin{align*}
				\NSG\lesssim
				\begin{cases}
					L^{q_1+q_2} 2^{\max\{\sigma,1/\sigma\} L} &\text{ for }\sigma\neq 1\,,\\
					L^{q_1+q_2+1}2^{L} &\text{ for }\sigma=1\,,\\
				\end{cases}
			\end{align*}
			where again $q_i=d_i-1$ for sparse grid rules and $q_i=0$ otherwise.
			}
		\end{theorem}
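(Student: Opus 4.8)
The plan is to count nodes by reducing each tensor-product rule to a sum of the node counts of its constituent difference formulae, exploiting that $F$ enters only through the integrand values and not through the quadrature nodes. The composition $\DQuadI_{l_1}\circ\DQuadII_{l_2}$ evaluates an inner rule of $\lesssim N_{2,l_2}$ nodes (those of $\QuadII_{l_2}$ together with $\QuadII_{l_2-1}$) at each of the $\lesssim N_{1,l_1}$ outer nodes, and therefore uses at most $N_{1,l_1}N_{2,l_2}$ points. Hence the node counts coincide with those of the plain tensor-product quadrature, and the argument follows the lines of Theorem 4.1 in \cite{GriebelHarbrecht2011}; the only genuinely new ingredient is the extra polynomial factor $l_i^{d_i-1}$ that appears in \eqref{size SG} when the $\Qi$ are themselves sparse grid rules.

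For the full tensor product rule I would first observe that $\LSFG(L)$ is the box $\{(l_1,l_2): 1\le l_1\le L/\sigma,\ 1\le l_2\le \sigma L\}$, so the telescoping sum in \eqref{GeneralSG} collapses to the single product rule $\QuadI_{\lfloor L/\sigma\rfloor}\otimes\QuadII_{\lfloor\sigma L\rfloor}$. Consequently $\NFG\asymp N_{1,\lfloor L/\sigma\rfloor}\cdot N_{2,\lfloor\sigma L\rfloor}$: the upper bound is the product of the geometric sums $\sum_{l_1}N_{1,l_1}\asymp N_{1,\lfloor L/\sigma\rfloor}$ and $\sum_{l_2}N_{2,l_2}\asymp N_{2,\lfloor\sigma L\rfloor}$, while the lower bound is furnished by the distinct nodes of the top-level product alone. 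Inserting \eqref{size QMC} gives $2^{L/\sigma}2^{\sigma L}=2^{L(\sigma+1/\sigma)}$ with $q_1=q_2=0$, whereas inserting \eqref{size SG} multiplies this by $(L/\sigma)^{d_1-1}(\sigma L)^{d_2-1}\asymp L^{q_1+q_2}$, which is the claimed estimate.

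For the sparse tensor product rule only an upper bound is required, so I would estimate $\NSG\lesssim\sum_{(l_1,l_2)\in\LSSG(L)}N_{1,l_1}N_{2,l_2}$. Using $l_i\lesssim L$ throughout $\LSSG(L)$, the sparse-grid polynomial factors satisfy $l_1^{d_1-1}l_2^{d_2-1}\lesssim L^{q_1+q_2}$ and may be pulled out of the sum (they are absent, i.e.\ $q_i=0$, in the MC/QMC/Frolov case), reducing everything to the core geometric sum $S(L):=\sum_{\sigma l_1+l_2/\sigma\le L}2^{l_1+l_2}$. The key preliminary step is to determine $\max(l_1+l_2)$ subject to $\sigma l_1+l_2/\sigma\le L$: parametrising the boundary gives $l_1+l_2=\sigma L+(1-\sigma^2)l_1$, which is linear in $l_1$ and hence attains its maximum at one of the corners $(L/\sigma,0)$ or $(0,\sigma L)$, in both cases equal to $\max\{\sigma,1/\sigma\}L$.

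The hard part will be the behaviour of $S(L)$ near this extremal direction, where the cases $\sigma\neq1$ and $\sigma=1$ genuinely differ. For $\sigma\neq1$ the exponent $l_1+l_2$ decreases geometrically as one moves away from the dominating corner along the admissible triangle, so $S(L)$ equals, up to an $L$-independent constant, its largest summand $2^{\max\{\sigma,1/\sigma\}L}$. For $\sigma=1$ the constraint degenerates to $l_1+l_2\le L$ and $l_1+l_2$ is constant along each anti-diagonal; summing $2^{l_1+l_2}$ over the $\asymp k$ lattice points with $l_1+l_2=k$ yields $\sum_{k\le L}(k+1)2^k\asymp L\,2^L$, which is precisely where the extra factor $L$, i.e.\ the raised exponent $q_1+q_2+1$, originates. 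Combining the polynomial prefactor $L^{q_1+q_2}$ with these two estimates for $S(L)$ reproduces exactly the two branches of the asserted bound on $\NSG$; care must be taken only with the rounding of $L/\sigma$ and $\sigma L$ to integers, which affects the result merely through the constants hidden in $\lesssim$.
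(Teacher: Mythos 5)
Your proposal is correct and follows essentially the same route as the paper's proof: collapse the full-grid telescoping sum to the single top-level product rule $\QuadI_{L/\sigma}\otimes\QuadII_{\sigma L}$, and bound $\NSG$ by summing $N_{1,l_1}N_{2,l_2}\asymp 2^{l_1}l_1^{q_1}2^{l_2}l_2^{q_2}$ over the simplex $\LSSG(L)$, pulling out the polynomial factors as $L^{q_1+q_2}$ and analyzing the remaining geometric sum in the three cases $\sigma>1$, $\sigma<1$, $\sigma=1$. Your corner-maximization and anti-diagonal bookkeeping is just a geometric rephrasing of the paper's iterated sum $2^{\sigma L}\sum_{l_1\le L/\sigma}2^{l_1(1-\sigma^2)}$, including the same source of the extra factor $L$ when $\sigma=1$.
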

		\begin{proof}
			{\color{commentcolor} First, we note that it holds for MC, QMC, Frolov and Sparse Grid rules that 
			\begin{align} \label{eqn_numnodes2}
				\Ni \asymp 2^{l_i} l_i^{q_i}\,,
			\end{align}
			where $q_i=d_i-1$ for sparse grid rules and $q_i=0$ otherwise. The estimate for the full grid $\NFG$ follows directly from expanding $\QFG$:
			\begin{align} \label{FTPExpansion}
				\QFG(F_{\innerintegrand})
				=\sum_{\max\{\sigma l_1,\frac{l_2}{\sigma}\}\leq L}\DQuadI_{l_1}\circ \DQuadII_{l_2}(\innerintegrand,\cdot)
				=\QuadI_{L/\sigma}\left(F\left(\cdot,\th,\QuadII_{\sigma\cdot L}(\innerintegrand,\cdot)\right)\right)\,.
			\end{align}
			Then we apply \eqref{eqn_numnodes2} and obtain 
			\begin{align*}
				\NFG \asymp 2^{\frac{L}{\sigma}} \left( \frac{L}{\sigma} \right)^{q_1} \cdot 2^{\sigma L} \left( \sigma L \right)^{q_2}  \asymp 2^{L(\sigma+\frac{1}{\sigma})}L^{q_1+q_2}\,.
			\end{align*}
			For the sparse grid case $\QSG$, we consider each term $\DQuadI_{l_1}\circ \DQuadII_{l_2}$ and again use \eqref{eqn_numnodes2} to compute
			\begin{align}\label{UpperBoundNSGDerivation}
				\NSG&\asymp \sum_{\sigma l_1+\frac{l_2}{\sigma}\leq L} 2^{l_1} l_1^{q_1} \cdot 2^{l_2} l_2^{q_2} 
				=\sum_{l_1=0}^{L/\sigma}\sum_{l_2=0}^{\sigma L-\sigma^2 l_1} 2^{l_1} 2^{l_2} l_1^{q_1} l_2^{q_2} \\
				&\lesssim L^{q_1+q_2} \sum_{l_1=0}^{L/\sigma} 2^{l_1} \sum_{l_2=0}^{\sigma L-\sigma^2 l_1} 2^{l_2} \asymp L^{q_1+q_2}2^{\sigma L} \sum_{l_1=0}^{L/\sigma} 2^{l_1(1-\sigma^2)}\nonumber\,.
			\end{align}
			For $\sigma>1$, this implies $\NSG \lesssim L^{q_1+q_2} 2^{\sigma L}$ and for $\sigma<1$ we get $\NSG \lesssim L^{q_1+q_2} 2^{L/\sigma}$. Only for $\sigma=1$ we have $\NSG \lesssim L^{q_1+q_2+1}2^{L}$.
			}
		\end{proof}
		\textcolor{commentcolor}{
		\begin{remark}\label{LowerBoundNSG}
			Note that we can also provide a lower asymptotic bound on $\NSG$ which will be useful later on. Starting similarly as in (\ref{UpperBoundNSGDerivation}), we get
			\begin{align*}
				\NSG\asymp\sum_{l_1=0}^{L/\sigma}\sum_{l_2=0}^{\sigma L-\sigma^2 l_1} 2^{l_1} 2^{l_2} l_1^{q_1} l_2^{q_2}\gtrsim\sum_{l_1=0}^{L/\sigma}\sum_{l_2=0}^{\sigma L-\sigma^2 l_1} 2^{l_1} 2^{l_2}\gtrsim  2^{L/\sigma}
			\end{align*}
			after leaving out all but one summand.
		\end{remark}
		}
		We observe that the reduction from FTP to STP is most substantial if $\sigma$ is close to 1, since the factor $\sigma+1/\sigma$ is reduced to $\max\{\sigma,1/\sigma\}$.
		\\To prove bounds for $\EFG$ and $\ESG$ similar to those in \cite{GriebelHarbrecht2011} but generalized with respect to the intermediate function $F$, we need the well-known notion of \textit{H\"older continuity}.
		\begin{definition}\label{Hoelder} (H\"older continuity)\\
			A function $f:\Omega\subset\R^d\ra\R$ is called \textit{H\"older continuous} if there exist $\alpha,C>0$ s.t.
			\begin{align*}
				|f(x)-f(y)|\leq C||x-y||^\alpha
			\end{align*}
			for all $x,y\in\Omega$.
		\end{definition}
		We then have the following result:
		\begin{theorem}\label{TPconvergence} (Error bound for full tensor product quadrature)\\
			Let $\Qi$, $i=1,2$, be quadrature formulas as in Theorem \ref{TPsize} and let $s_i$ be their order of convergence, respectively. Suppose that $F(\cdot,\th,\cdot)\in H_{\text{mix}}^{r_1}(\OM\times\R;\nu)$ for every $\th\in\Theta$, $\Int_2(\innerintegrand,\cdot)\in H_{\text{mix}}^{r_1}(\OM;\nu)$ and $F(z,\th,\cdot)$ is H\"older continuous with exponent $\alpha$ for any $z\in\OM$ and for any $\th\in\Theta$ and $\innerintegrand(z,\cdot)\in H_{\text{mix}}^{r_2}(\OMM,\nu)$ \textcolor{commentcolor}{with a uniform bound on $||\innerintegrand(z, \cdot)||$} for all $z\in\OM$. Then, \textcolor{commentcolor}{with $\tilde{t}_i:=t_i-s_iq_i$,} the error of the full tensor product quadrature rule $\QFG$ is given by
			\textcolor{commentcolor}{
			\begin{align*}
				\EFG=O\left((L\sigma)^{\alpha \tilde{t}_2}\left(\frac{L}{\sigma}\right)^{\tilde{t}_1}2^{-L\min(s_1/\sigma,\sigma\alpha s_2)}\right)\,.
			\end{align*}
			}
		\end{theorem}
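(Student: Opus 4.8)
The plan is to split $\EFG$ by the triangle inequality into an \emph{outer} quadrature error and a \emph{propagated inner} quadrature error, exploiting the collapsed form (\ref{FTPExpansion}), namely $\QFG(F_{\innerintegrand}) = \QuadI_{L/\sigma}\big(F(\cdot,\theta,\QuadII_{\sigma L}(\innerintegrand,\cdot))\big)$. Inserting the intermediate quantity $\QuadI_{L/\sigma}(F_{\innerintegrand})$, where $F_{\innerintegrand}(z)=F(z,\theta,\Int_2(\innerintegrand,z))$, yields
\[
	\EFG \le \left| \Int_1(F_{\innerintegrand}) - \QuadI_{L/\sigma}(F_{\innerintegrand}) \right| + \left| \QuadI_{L/\sigma}\left( F(\cdot,\theta,\Int_2(\innerintegrand,\cdot)) - F(\cdot,\theta,\QuadII_{\sigma L}(\innerintegrand,\cdot)) \right) \right|.
\]
The first summand isolates the error of the outer rule applied to the exact inner integral, the second isolates the effect of replacing the exact inner integral by its quadrature before the outer rule acts.

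For the outer summand I would note that $F_{\innerintegrand}\in H_{\text{mix}}^{r_1}(\OM;\nu)$, which follows by composing $F(\cdot,\theta,\cdot)\in H_{\text{mix}}^{r_1}(\OM\times\R;\nu)$ with $\Int_2(\innerintegrand,\cdot)\in H_{\text{mix}}^{r_1}(\OM;\nu)$ (a standard, if technical, chain-/product-rule fact for mixed Sobolev spaces). Hence the error bound (\ref{ErrorBounds Qi}) applies at the outer level $l_1=L/\sigma$. Using $N_{1,L/\sigma}\asymp 2^{L/\sigma}(L/\sigma)^{q_1}$, so that $\log N_{1,L/\sigma}\asymp L/\sigma$, I convert the node-count bound into the level form $O\big(2^{-s_1 L/\sigma}(L/\sigma)^{t_1-s_1q_1}\big)=O\big(2^{-s_1 L/\sigma}(L/\sigma)^{\tilde t_1}\big)$, which is exactly the first candidate term.

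For the inner summand I would expand $\QuadI_{L/\sigma}$ with weights $w_{n_1}^{1,L/\sigma}$ and nodes $z_{n_1}$, then apply H\"older continuity of $F(z,\theta,\cdot)$ with exponent $\alpha$ to each summand,
\[
	\left| F(z_{n_1},\theta,\Int_2(\innerintegrand,z_{n_1})) - F(z_{n_1},\theta,\QuadII_{\sigma L}(\innerintegrand,z_{n_1})) \right| \le C \left| \Int_2(\innerintegrand,z_{n_1}) - \QuadII_{\sigma L}(\innerintegrand,z_{n_1}) \right|^\alpha.
\]
Here the uniform bound on $\|\innerintegrand(z,\cdot)\|_{H_{\text{mix}}^{r_2}}$ lets me bound the inner quadrature error at level $l_2=\sigma L$ \emph{uniformly} in $z$ by $O\big(2^{-s_2\sigma L}(\sigma L)^{\tilde t_2}\big)$ via (\ref{ErrorBounds Qi}); raising to the power $\alpha$ gives $O\big(2^{-\alpha s_2\sigma L}(\sigma L)^{\alpha\tilde t_2}\big)$. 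Pulling this uniform factor out of the sum and using the stability $\sum_{n_1}|w_{n_1}^{1,L/\sigma}|=\|\QuadI_{L/\sigma}\|=O(1)$ of the outer rule bounds the inner summand by $O\big((\sigma L)^{\alpha\tilde t_2}2^{-\alpha s_2\sigma L}\big)$.

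Finally I would add the two contributions: the decay is governed by the slower exponential rate, $2^{-L\min(s_1/\sigma,\sigma\alpha s_2)}$, and since $\tilde t_1,\tilde t_2\ge 0$ for the rules considered (for instance $\tilde t_i=d_i-1$ for sparse grids and $\tilde t_i=t_i$ when $q_i=0$), each polynomial prefactor is $\ge 1$ and is dominated by the product $(L\sigma)^{\alpha\tilde t_2}(L/\sigma)^{\tilde t_1}$, giving the claimed bound. The main obstacle is precisely the inner summand: the novelty over the linear case $F(z,\theta,t)=t$ of \cite{GriebelHarbrecht2011} is passing the inner quadrature error through the nonlinear $F$ by H\"older continuity, which simultaneously produces the factor $\alpha$ on the inner rate and forces the inner error estimate to hold \emph{uniformly} in the outer node $z$ (hence the uniform-norm assumption on $\innerintegrand(z,\cdot)$), together with stability of the outer weights so that the propagated error is not amplified by $\QuadI_{L/\sigma}$.
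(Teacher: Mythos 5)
Your proposal is correct and follows essentially the same route as the paper's proof: the same triangle-inequality splitting via the collapsed form (\ref{FTPExpansion}), the outer term handled by (\ref{ErrorBounds Qi}) with the node-to-level conversion (\ref{TrafoNtol}), and the inner term handled by H\"older continuity plus boundedness of the outer quadrature operator (the paper justifies this via the reproducing-kernel property rather than your explicit weight-sum stability, but this is the same idea) together with the uniform bound on $\|\innerintegrand(z,\cdot)\|$.
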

		\begin{proof}
			We reuse the expansion (\ref{FTPExpansion}) and omit for simplicity the dependence on $\th$ in the following. With the triangle inequality and (\ref{ErrorBounds Qi}) we get
			\begin{align*}
				\EFG(F_{\innerintegrand})
				&=\left|\Int_1(F_{\innerintegrand})-\QFG(F_{\innerintegrand})\right|\\
				&=\left|\Int_1\left(F\left(\cdot,\Int_2(\innerintegrand,\cdot)\right)\right)-\QuadI_{L/\sigma}\left(F\left(\cdot,\QuadII_{\sigma\cdot L}(\innerintegrand,\cdot)\right)\right)\right|\\
				&\leq\left|\left(\Int_1-\QuadI_{L/\sigma}\right)\left(F\left(\cdot,\Int_2(\innerintegrand,\cdot)\right)\right)\right|+\\
				\ \ \ \ \ &\left|\QuadI_{L/\sigma}\left(F\left(\cdot,\Int_2(\innerintegrand,\cdot)\right)-F\left(\cdot,\QuadII_{\sigma\cdot L}(\innerintegrand,\cdot)\right)\right)\right| \,.
			\end{align*}
			The first term measures the approximation accuracy of $\QuadI_{l_1}$. Since $F(\cdot,\th,\cdot)\in H_{\text{mix}}^{r_1}(\OM)$ for any $\th\in\Theta$ and $\Int_2(\innerintegrand,\cdot)\in H_{\text{mix}}^{r_1}(\OM)$ and since $s_1$ is the order of convergence for $\QuadI_{l_1}$, we obtain for it the rate (\ref{ErrorBounds Qi}) with $i=1$, i.e.
			\begin{align*}
				\left|\left(\Int_1-\QuadI_{L/\sigma}\right)\left(F\left(\cdot,\Int_2(\innerintegrand,\cdot)\right)\right)\right|
				\lesssim N_{1,L/\sigma}^{-s_1}\log(N_{1,L/\sigma})^{t_1}\,.
			\end{align*}
			For the second term, we use H\"older continuity, the fact that $\innerintegrand(z,\cdot)\in H_{\text{mix}}^{r_2}(\OMM)$ for all $z$ and the boundedness of the operator $\QuadI_{L/\sigma}$, \textcolor{commentcolor}{which follows from the fact that point evaluation is a bounded functional in  $H_{\text{mix}}^{s_i}$, as this space is a reproducing kernel Hilbert space for all $s_i>1/2$ \cite{berlinet}}. Thus, we obtain also for $\QuadII_{l_2}$ the rate (\ref{ErrorBounds Qi}) for $i=2$, i.e.
			\begin{align*}
				\Big|\QuadI_{L/\sigma} \Big(F\left(\cdot,\Int_2(\innerintegrand,\cdot)\right)&-F\left(\cdot,\QuadII_{\sigma\cdot L}(\innerintegrand,\cdot)\right)\Big)\Big|\\
				& \leq  ||\QuadI_{L/\sigma}||\left|F\left(\cdot,\Int_2(\innerintegrand,\cdot)\right)-F\left(\cdot,\QuadII_{\sigma\cdot L}(\innerintegrand,\cdot)\right)\right|\\
				& \leq C\left|\Int_2(\innerintegrand,\cdot)-\QuadII_{\sigma\cdot L}(\innerintegrand,\cdot)\right|^\alpha\\
				&\lesssim N_{2,\sigma\cdot L}^{-\alpha s_2}\log(N_{2,\sigma\cdot L})^{\alpha t_2}\,.
			\end{align*}
			Combining both summands yields the desired result
			\begin{align*}
				\EFG(F_{\innerintegrand})&\lesssim N_{1,L/\sigma}^{-s_1}\log(N_{1,L/\sigma})^{t_1}+N_{2,\sigma\cdot L}^{-\alpha s_2}\log(N_{2,\sigma\cdot L})^{\alpha t_2}\\
				&\lesssim 2^{-s_1L/\sigma}(L/\sigma)^{t_1-s_1 q_1}+2^{-\alpha s_2L\sigma}(L\sigma)^{\alpha t_2-\alpha s_2q_2}\\
				&=O\left((L\sigma)^{\alpha \tilde{t}_2}\left(\frac{L}{\sigma}\right)^{\tilde{t}_1}2^{-L\min(s_1/\sigma,\alpha s_2\sigma)}\right)\,,
			\end{align*}
			\textcolor{commentcolor}{where $\tilde{t}_i=t_i-s_iq_i$. Note that $N_{1,L/\sigma}$ and $N_{2,\sigma\cdot L}$ have the same asymptotic bound (\ref{eqn_numnodes2}) in $L$, which provides the desired result by using the calculation 
			\begin{align}\label{TrafoNtol}
				\Ni^{-s_i}\log(\Ni)^{t_i}&\asymp 2^{-s_il_i}l_i^{-s_iq_i}\log\left(2^{l_i}l_i^{q_i}\right)^{t_i}\nonumber\\
				&=2^{-s_il_i}l_i^{-s_iq_i}\left(\log(2)l_i+q_i\log(l_i)\right)^{t_i}\nonumber\\
				&\lesssim 2^{-s_il_i}l_i^{t_i-s_iq_i}\,.
			\end{align}}
		\end{proof}
		The following extension of Theorem 4.3 in \cite{GriebelHarbrecht2011} shows that STP quadrature gives a similar result.
		\begin{theorem}\label{STPconvergence} (Error bound for sparse tensor product quadrature)\\
			Let $\Quad_{l_i}$, $s_i$, for $i=1,2$, $F$ and $\innerintegrand$ be as in Theorem \ref{TPconvergence}. Then, \textcolor{commentcolor}{with $\tilde{t}_i:=t_i-s_iq_i$,} 
			the error of the sparse tensor product quadrature is given by
			\begin{align*}
				\ESG=
				\begin{cases}
					O\left((L\sigma)^{\alpha \tilde{t}_2}\left(\frac{L}{\sigma}\right)^{\tilde{t}_1}2^{-L\min\{s_1/\sigma,\alpha s_2\sigma\}}\right) &\text{ for }\frac{s_1}{\sigma}\neq\alpha s_2\sigma\,,\\
					\\
					O\left((L\sigma)^{\alpha \tilde{t}_2}\left(\frac{L}{\sigma}\right)^{\tilde{t}_1+1}2^{-L\alpha s_2\sigma}\right) &\text{ for }\frac{s_1}{\sigma}=\alpha s_2\sigma\,.
				\end{cases}
			\end{align*}
		\end{theorem}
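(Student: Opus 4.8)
The plan is to follow the telescoping philosophy behind (\ref{integral-expansion}) and (\ref{GeneralSG}). Since $\QSG$ retains exactly the indices in $\LSSG(L)$, the error is the tail of the doubly-infinite series,
\[
\ESG(F_{\innerintegrand})
= \Big| \sum_{(l_1,l_2)\in\N^2\setminus\LSSG(L)} \DQuadI_{l_1}\circ\DQuadII_{l_2}(\innerintegrand,\cdot) \Big|
\le \sum_{\sigma l_1 + l_2/\sigma > L} \big|\DQuadI_{l_1}\circ\DQuadII_{l_2}(\innerintegrand,\cdot)\big|,
\]
so that everything reduces to a per-level estimate followed by a summation over the complement of the sparse index set.

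First I would establish the per-level bound $|\DQuadI_{l_1}\circ\DQuadII_{l_2}(\innerintegrand,\cdot)| \lesssim 2^{-s_1 l_1} l_1^{\tilde{t}_1}\, 2^{-\alpha s_2 l_2} l_2^{\alpha \tilde{t}_2}$, which is the per-summand version of the two contributions already isolated in the proof of Theorem \ref{TPconvergence}. Using the factorization $|\DQuadI_{l_1}\circ\DQuadII_{l_2}(\innerintegrand,\cdot)| \le \|\DQuadI_{l_1}\|\cdot|\DQuadII_{l_2}(\innerintegrand,\cdot)|$ noted after (\ref{integral-expansion}), I would bound the outer factor by writing $\DQuadI_{l_1} = (\Int_1 - \QuadI_{l_1-1}) - (\Int_1 - \QuadI_{l_1})$: the integral cancels, so both summands are genuine quadrature errors, and (\ref{ErrorBounds Qi}) together with the conversion (\ref{TrafoNtol}) give $\|\DQuadI_{l_1}\| \lesssim 2^{-s_1 l_1} l_1^{\tilde{t}_1}$. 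For the inner factor I would invoke H\"older continuity of $F(z,\th,\cdot)$ to pass to $|\DQuadII_{l_2}(\innerintegrand,z)| \le C\,|\QuadII_{l_2}(\innerintegrand,z) - \QuadII_{l_2-1}(\innerintegrand,z)|^\alpha$, split the inner difference again through $\Int_2$, and apply (\ref{ErrorBounds Qi}) and (\ref{TrafoNtol}) with $i=2$, which yields $|\DQuadII_{l_2}(\innerintegrand,\cdot)| \lesssim 2^{-\alpha s_2 l_2} l_2^{\alpha \tilde{t}_2}$ uniformly in $z$ by the assumed uniform bound on $\|\innerintegrand(z,\cdot)\|$.

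The main work is then the summation $S := \sum_{\sigma l_1 + l_2/\sigma > L} 2^{-s_1 l_1} l_1^{\tilde{t}_1}\, 2^{-\alpha s_2 l_2} l_2^{\alpha \tilde{t}_2}$ over the complement of the triangle $\LSSG(L)$. I would carry out the $l_2$-sum first: for fixed $l_1 \le L/\sigma$ the constraint reads $l_2 > \sigma L - \sigma^2 l_1$, so the geometric tail is dominated by its first term and contributes $\asymp 2^{-\alpha s_2(\sigma L - \sigma^2 l_1)}$ up to a polynomial factor bounded by $(L\sigma)^{\alpha\tilde{t}_2}$, whereas for $l_1 > L/\sigma$ all $l_2\ge 0$ are admissible and the $l_2$-sum is merely $O(1)$. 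Substituting and pulling out $2^{-\alpha s_2\sigma L}$ reduces the outer sum to a geometric series in $l_1$ with ratio $2^{\alpha s_2\sigma^2 - s_1} = 2^{\sigma(\alpha s_2\sigma - s_1/\sigma)}$. The sign of the exponent then decides which corner of the boundary dominates: when $s_1/\sigma > \alpha s_2\sigma$ the series is summable and the $l_1=0$ end dominates, giving $2^{-L\alpha s_2\sigma}$; when $s_1/\sigma < \alpha s_2\sigma$ the $l_1 = L/\sigma$ end dominates and, together with the $l_1 > L/\sigma$ remainder, gives $2^{-L s_1/\sigma}$. Both cases combine to $2^{-L\min\{s_1/\sigma,\alpha s_2\sigma\}}$ with prefactor $(L\sigma)^{\alpha\tilde{t}_2}(L/\sigma)^{\tilde{t}_1}$ obtained by bounding each $l_i^{(\cdot)}$ along the boundary. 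In the balanced case $s_1/\sigma = \alpha s_2\sigma$ the ratio equals one, so the $\asymp L/\sigma$ equally-sized terms are summed directly, producing the announced extra factor $(L/\sigma)^{1}$.

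I expect the summation and its case distinction to be the main obstacle, in particular the bookkeeping of the polynomial factors along the dominating boundary and the transition at $l_1 = L/\sigma$ between the genuinely truncated and the fully admissible $l_2$-ranges; the degenerate case $s_1/\sigma = \alpha s_2\sigma$, where the geometric ratio is one and an extra factor in $L$ appears, must be treated separately. A secondary subtlety, inherited from Theorem \ref{TPconvergence}, is that the operator-norm decay of $\DQuadI_{l_1}$ is only meaningful on $H_{\text{mix}}^{r_1}(\OM)$, so one must ensure that $z\mapsto\DQuadII_{l_2}(\innerintegrand,z)$ carries the required $z$-regularity with a norm controlled uniformly in $l_2$; this is exactly what the combination of H\"older continuity of $F$, the hypothesis $F(\cdot,\th,\cdot)\in H_{\text{mix}}^{r_1}(\OM\times\R)$ and the uniform bound on $\|\innerintegrand(z,\cdot)\|$ is designed to provide, so I would import this step verbatim from the full tensor product argument.
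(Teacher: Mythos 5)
Your proposal is correct and follows essentially the same route as the paper's own proof: the tail-of-telescoping-series representation, the factorization $\|\DQuadI_{l_1}\|\cdot|\DQuadII_{l_2}|$ with the operator-norm bound via triangle inequality and H\"older continuity, the split of the complement of $\LSSG(L)$ into the truncated-$l_2$ region ($l_1\le L/\sigma$) and the full-$l_2$ region ($l_1>L/\sigma$), and the three-way case distinction on the geometric ratio $2^{\sigma(\alpha s_2\sigma-s_1/\sigma)}$ including the degenerate balanced case that produces the extra factor of $L/\sigma$. The only cosmetic difference is that the paper controls the $l_2$-tail via an explicit polylogarithm bound where you argue by domination of the first term of the geometric tail; these are the same estimate.
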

		\begin{proof}
			Using the triangle inequality and (\ref{ErrorBounds Qi}) for $i=1$ we get a bound for $\DQuadI_{l_1}$ for functions $f\in H_{\text{mix}}^{r_1}$,
			\begin{align*}
				||\DQuadI_{l_1}||&=\max_{f\in H_{\text{mix}}^{r_1}\,,||f||\leq 1}\left|\left|\DQuadI_{l_1}(f)\right|\right|\\
				&\leq\max_{f\in H_{\text{mix}}^{r_1}\,,||f||\leq 1}\left|\left|\QuadI_{l_1}(f)-\Int_1(f)\right|\right|+\left|\left|\Int_1(f)-\QuadI_{l_1-1}(f)\right|\right|\\
				&\lesssim N_{1,l_1}^{-s_1}\log(N_{1,l_1})^{t_1}\,.
			\end{align*}
			\textcolor{commentcolor}{Analogously, and using that $F$ is H\"older continuous, we get 
			\begin{align*}
				\left|\DQuadII_{l_2}(\innerintegrand,\cdot)\right|\leq||\DQuadII_{l_2}||\cdot||\innerintegrand||\lesssim N_{2,l_2}^{-\alpha s_2}\log(N_{2,l_2})^{\alpha t_2}\,.
			\end{align*}
			}Expanding $\Int_1(F_{\innerintegrand})$ according to (\ref{integral-expansion}) \textcolor{commentcolor}{and plugging in the above estimates for $||\DQuadI_{l_1}||$ and $||\DQuadII_{l_2}||$, as well as (\ref{eqn_numnodes2}) and (\ref{TrafoNtol})}, we obtain
			\begin{align*}
				\ESG(F_{\innerintegrand})
				&=\left|\Int_1(F_{\innerintegrand})-\!\!\!\!\!\sum_{\sigma l_1+\frac{l_2}{\sigma}\leq L}\DQuadI_{l_1}\left( \DQuadII_{l_2}(\innerintegrand,\cdot)\right)\right|\\
				&\leq\sum_{\sigma l_1+\frac{l_2}{\sigma}>L}\left|\DQuadI_{l_1}\left( \DQuadII_{l_2}(\innerintegrand,\cdot)\right)\right|\\
				&\leq\sum_{\sigma l_1+\frac{l_2}{\sigma}>L} \left|\left|\DQuadI_{l_1}\right|\right|\left|\DQuadII_{l_2}(\innerintegrand,\cdot)\right|\\
				&\lesssim \sum_{\sigma l_1+\frac{l_2}{\sigma}> L} N_{1,l_1}^{-s_1}\log(N_{1,l_1})^{t_1} \cdot N_{2,l_2}^{-\alpha s_2}\log(N_{2,l_2})^{\alpha t_2} \\
				&\lesssim \sum_{\sigma l_1+\frac{l_2}{\sigma}> L}2^{-s_1l_1}l_1^{t_1-s_1q_1}2^{-\alpha s_2l_2}l_2^{\alpha t_2-\alpha s_2q_2}\\
				&\lesssim\sum_{\sigma l_1+\frac{l_2}{\sigma}>L}2^{-(s_1l_1+\alpha s_2l_2)}l_1^{\tilde{t}_1}l_2^{\alpha \tilde{t}_2}\,.
			\end{align*}
			We now split the index set $\left\{(l_1,l_2):\sigma l_1+\frac{l_2}{\sigma}>L\right\}$ into the two disjoint sets 
			\begin{align*}
				S_1:=\left\{(l_1,l_2):0\leq l_1\leq\frac{L}{\sigma}, M<l_2\right\} \quad \mbox{and} \quad
				S_2:=\left\{(l_1,l_2):\frac{L}{\sigma}<l_1, 0\leq l_2\right\}\,,
			\end{align*}
			where again $M=L\sigma-l_1\sigma^2$, and sum over each index set separately. This gives
			\begin{align*}
				\sum_{(l_1,l_2)\in S_1}
				&2^{-(s_1l_1+\alpha s_2l_2)}l_1^{\tilde{t}_1}l_2^{\alpha \tilde{t}_2}\\
				&=\sum_{l_1=0}^{L/\sigma}l_1^{\tilde{t}_1}2^{-s_1l_1}\sum_{l_2=M+1}^{\infty}l_2^{\alpha \tilde{t}_2}2^{-\alpha s_2l_2}\\
				&\leq\sum_{l_1=0}^{L/\sigma}l_1^{\tilde{t}_1}(M+1)^{\alpha \tilde{t}_2}2^{-(s_1l_1+\alpha s_2M)}\sum_{l_2=1}^{\infty}l_2^{\alpha \tilde{t}_2}2^{-\alpha s_2l_2}\\
				&\leq (L\sigma)^{\alpha \tilde{t}_2}\left(\frac{L}{\sigma}\right)^{\tilde{t}_1}\polylog_{-\alpha \tilde{t}_2}(2^{-\alpha s_2})\sum_{l_1=0}^{L/\sigma}2^{-(s_1l_1+\alpha s_2M)}\\
				&\lesssim\ (L\sigma)^{\alpha \tilde{t}_2}\left(\frac{L}{\sigma}\right)^{\tilde{t}_1}2^{-\alpha s_2L\sigma}\sum_{l_1=0}^{L/\sigma}2^{-l_1\sigma(s_1/\sigma-\alpha s_2\sigma)}\,,
			\end{align*}
			where $\polylog$ denotes the \textit{Polylogarithm}. In the same fashion we obtain
			\begin{align*}
				\sum_{(l_1,l_2)\in S_2}2^{-(s_1l_1+\alpha s_2l_2)}
				\lesssim\left(\frac{L}{\sigma}\right)^{\tilde{t}_1}2^{-s_1L/\sigma}
				\lesssim \left(\frac{L}{\sigma}\right)^{\tilde{t}_1}2^{-\alpha s_2L\sigma}\ 2^{-L(s_1/\sigma-\alpha s_2\sigma)}\,.
			\end{align*}
			Joining both sums we distinguish three cases: For $\frac{s_1}{\sigma}<\alpha s_2\sigma$, we have 
			\begin{align*}
				\sum_{\sigma l_1+\frac{l_2}{\sigma}>L}
				&2^{-(s_1l_1+\alpha s_2l_2)}l_1^{\tilde{t}_1}l_2^{\alpha \tilde{t}_2}\\
				\lesssim&\ 2^{-\alpha s_2L\sigma}\left(\frac{L}{\sigma}\right)^{\tilde{t}_1}\times\\
				&\left((L\sigma)^{\alpha \tilde{t}_2}\sum_{l_1=0}^{L/\sigma}2^{-l_1\sigma(s_1/\sigma-\alpha s_2\sigma)}+2^{-L(s_1/\sigma-\alpha s_2\sigma)}\right)\\
				\lesssim&\ 2^{-\alpha s_2L\sigma}\left(\frac{L}{\sigma}\right)^{\tilde{t}_1}\times\\
				&\left((L\sigma)^{\alpha \tilde{t}_2}2^{-L(s_1/\sigma-\alpha s_2\sigma)}+2^{-L(s_1/\sigma-\alpha s_2\sigma)}\right)\\
				\lesssim&\ (L\sigma)^{\alpha \tilde{t}_2}\left(\frac{L}{\sigma}\right)^{\tilde{t}_1}2^{-s_1L/\sigma}=O\left((L\sigma)^{\alpha \tilde{t}_2}\left(\frac{L}{\sigma}\right)^{\tilde{t}_1}2^{-s_1L/\sigma}\right)\,.
			\end{align*}
			For $\frac{s_1}{\sigma}>\alpha s_2\sigma$, we can bound the expression by
			\begin{align*}
				\sum_{\sigma l_1+\frac{l_2}{\sigma}>L}2^{-(s_1l_1+\alpha s_2l_2)}l_1^{\tilde{t}_1}l_2^{\alpha \tilde{t}_2}
				&\lesssim\ 2^{-\alpha s_2L\sigma}\left(\frac{L}{\sigma}\right)^{\tilde{t}_1}\left((L\sigma)^{\alpha \tilde{t}_2}\!\!\!+1\right)\\
				&\lesssim\ (L\sigma)^{\alpha \tilde{t}_2}\left(\frac{L}{\sigma}\right)^{\tilde{t}_1}2^{-\alpha s_2L\sigma}\\
				&=O\left((L\sigma)^{\alpha \tilde{t}_2}\left(\frac{L}{\sigma}\right)^{\tilde{t}_1}2^{-\alpha s_2L\sigma}\right)\,.
			\end{align*}
			Finally, for $\frac{s_1}{\sigma}=\alpha s_2\sigma$, we have
			\begin{align*}
				\sum_{\sigma l_1+\frac{l_2}{\sigma}>L}2^{-(s_1l_1+\alpha s_2l_2)}l_1^{\tilde{t}_1}l_2^{\alpha \tilde{t}_2}&\lesssim\ 2^{-\alpha s_2L\sigma}\left(\frac{L}{\sigma}\right)^{\tilde{t}_1}\left((L\sigma)^{\alpha \tilde{t}_2}\sum_{l_1=0}^{L/\sigma}1+1\right)\\
				&\lesssim\ (L\sigma)^{\alpha \tilde{t}_2}\left(\frac{L}{\sigma}\right)^{\tilde{t}_1+1}2^{-L\alpha s_2\sigma}\\
				&=O\left((L\sigma)^{\alpha \tilde{t}_2}\left(\frac{L}{\sigma}\right)^{\tilde{t}_1+1}2^{-L\alpha s_2\sigma}\right)\,.
			\end{align*}
			This concludes the proof.
		\end{proof}
		\bigbreak 
		Theorems \ref{TPconvergence} and \ref{STPconvergence} can also be stated for probabilistic error rates of the outer quadrature, e.g. from MC integration. Then, the mean squared error is used instead of a norm and the proof proceeds similar to the derivation of the mean squared error of MC integration.
		\\
		For both, $\QFG$ and $\QSG$, we can combine Theorems \ref{TPsize} and \ref{TPconvergence} or \ref{STPconvergence}, respectively, to obtain an error bound in terms of the costs of the corresponding quadrature formula.
		\begin{corollary}\label{corollary}
			Let $\Quad_{l_i}$, $s_i$, for $i=1,2$, $F$ and $\innerintegrand$ be as in Theorem \ref{TPconvergence} and set
			\textcolor{commentcolor}{
			\begin{align*}
				\gamma_\infty:=\frac{\min\{s_1/\sigma,\alpha s_2\sigma\}}{\sigma+1/\sigma}>0\quad \mbox{and}\quad 
				\gamma_1:=\frac{\min\{s_1/\sigma,\alpha s_2\sigma\}}{\max\{\sigma,1/\sigma\}}>0\,.
			\end{align*}
			Furthermore, define $\tilde{t}_1^w:=\tilde{t}_1+\gamma_wq_1$ and $\tilde{t}_2^w:=\tilde{t}_2+\frac{\gamma_wq_2}{\alpha}$ for $w\in\{1,\infty\}$. 
			\\
			The FTP error is 
			\begin{align*}
				\EFG=O\left(\log(\NFG)^{\tilde{t}_1^\infty+\alpha \tilde{t}_2^\infty}(\NFG)^{-\gamma_\infty}\right)\,.
			\end{align*}
			For STP quadrature we distinguish four cases according to the Theorems \ref{TPsize} and \ref{STPconvergence}. If $\sigma\neq1$ we have 
			\begin{align*}
				\ESG=
				\begin{cases}
					O\left((\log \NSG)^{\tilde{t}_1^1+\alpha \tilde{t}_2^1}(\NSG)^{-\gamma_1}\right) &\text{ for }\frac{s_1}{\sigma}\neq\alpha s_2\sigma\,,\\
					\\
					O\left((\log \NSG)^{\tilde{t}_1^1+\alpha \tilde{t}_2^1+1}(\NSG)^{-\gamma_1}\right) &\text{ for }\frac{s_1}{\sigma}=\alpha s_2\sigma\,,
				\end{cases}
			\end{align*}
			and if $\sigma=1$
			\begin{align*}
				\ESG=
				\begin{cases}
					O\left((\log \NSG)^{\tilde{t}_1^1+\alpha \tilde{t}_2^1+\gamma_1}(\NSG)^{-\gamma_1}\right) &\text{ for }s_1\neq\alpha s_2\,,\\
					\\
					O\left((\log \NSG)^{\tilde{t}_1^1+\alpha \tilde{t}_2^1+\gamma_1+1}(\NSG)^{-\gamma_1}\right) &\text{ for }s_1=\alpha s_2\,.
				\end{cases}
			\end{align*}
			}
		\end{corollary}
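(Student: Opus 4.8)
The plan is to eliminate the level parameter $L$ from the error estimates of Theorems \ref{TPconvergence} and \ref{STPconvergence} by inverting the cost estimates of Theorem \ref{TPsize}, so that each error is expressed purely in terms of the node count $\NFG$ or $\NSG$. The key algebraic observation is that the definitions of $\gamma_\infty$ and $\gamma_1$ are precisely the exponents that make the exponential decay $2^{-L\min\{s_1/\sigma,\alpha s_2\sigma\}}$ convert into a negative power of the cost, and that the polynomial-in-$L$ factor hidden in the cost estimate is exactly what turns the exponents $\tilde t_i$ into $\tilde t_i^{w}$.

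First I would treat the full tensor product case, where Theorem \ref{TPsize} provides the two-sided relation $\NFG \asymp 2^{L(\sigma+1/\sigma)}L^{q_1+q_2}$. This immediately gives $\log\NFG \asymp L(\sigma+1/\sigma)$, hence $L\asymp \log\NFG$, and $2^{L(\sigma+1/\sigma)}\asymp \NFG/L^{q_1+q_2}$. Since $\min\{s_1/\sigma,\alpha s_2\sigma\}=\gamma_\infty(\sigma+1/\sigma)$, I would rewrite the exponential factor of Theorem \ref{TPconvergence} as $2^{-L\min\{s_1/\sigma,\alpha s_2\sigma\}}=\bigl(2^{L(\sigma+1/\sigma)}\bigr)^{-\gamma_\infty}\asymp \NFG^{-\gamma_\infty}L^{\gamma_\infty(q_1+q_2)}$. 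Collecting this with the algebraic prefactor $(L\sigma)^{\alpha\tilde t_2}(L/\sigma)^{\tilde t_1}\asymp L^{\tilde t_1+\alpha\tilde t_2}$ and replacing the residual powers of $L$ by $\log\NFG$, the total logarithmic exponent becomes $\tilde t_1+\alpha\tilde t_2+\gamma_\infty(q_1+q_2)$, which by $\tilde t_1^\infty=\tilde t_1+\gamma_\infty q_1$ and $\alpha\tilde t_2^\infty=\alpha\tilde t_2+\gamma_\infty q_2$ equals $\tilde t_1^\infty+\alpha\tilde t_2^\infty$, yielding the claimed FTP bound.

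For the sparse tensor product rule the substitution is the same, but the cost estimate of Theorem \ref{TPsize} is only one-sided, so I would use two bounds together. Writing $\beta:=\max\{\sigma,1/\sigma\}$ so that $\gamma_1=\min\{s_1/\sigma,\alpha s_2\sigma\}/\beta$, the upper bound $\NSG\lesssim L^{q_1+q_2}2^{\beta L}$ gives $2^{-\beta L}\lesssim L^{q_1+q_2}/\NSG$ and therefore $2^{-L\min\{s_1/\sigma,\alpha s_2\sigma\}}=(2^{-\beta L})^{\gamma_1}\lesssim L^{\gamma_1(q_1+q_2)}\NSG^{-\gamma_1}$, which supplies the factor needed to turn the exponential into a power of $\NSG$. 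The lower bound of Remark \ref{LowerBoundNSG}, $\NSG\gtrsim 2^{L/\sigma}$, guarantees $L\lesssim\log\NSG$, so that all remaining powers of $L$ may be replaced by powers of $\log\NSG$. It then remains to read off the four cases: in the generic $\sigma\neq1$, $s_1/\sigma\neq\alpha s_2\sigma$ case the prefactor $L^{\tilde t_1+\alpha\tilde t_2}$ together with $L^{\gamma_1(q_1+q_2)}$ produces the exponent $\tilde t_1^1+\alpha\tilde t_2^1$; the borderline case $s_1/\sigma=\alpha s_2\sigma$ carries one extra factor $L/\sigma$ in Theorem \ref{STPconvergence}, adding $1$; for $\sigma=1$ the cost bound gains the extra factor $L$ (so $\NSG\lesssim L^{q_1+q_2+1}2^{L}$), which upon conversion contributes an additional $\gamma_1$, and the equal-rate subcase $s_1=\alpha s_2$ again adds $1$.

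The one delicate point, and the place where errors could creep in, is the bookkeeping of the logarithmic exponents and the direction of each inequality: the upper cost bound must be used to convert $2^{-cL}$ into a negative power of $N$ (this is precisely where the factor $L^{\gamma_w(q_1+q_2)}$, and hence the shift from $\tilde t_i$ to $\tilde t_i^{w}$, originates, using $\gamma_w>0$ to preserve the inequality), while the lower cost bound is needed only to ensure $L\lesssim\log N$. Beyond the node-count asymptotics \eqref{eqn_numnodes2} already established in Theorem \ref{TPsize}, no new analytic input is required, so I expect the entire argument to reduce to this careful substitution.
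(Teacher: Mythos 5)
Your proposal is correct and follows essentially the same route as the paper's proof: invert the cost estimates of Theorem~\ref{TPsize} (using the upper bound on $\NSG$ to convert $2^{-L\min\{s_1/\sigma,\alpha s_2\sigma\}}$ into $\NSG^{-\gamma_1}$ times a power of $L$, and the lower bound of Remark~\ref{LowerBoundNSG} to get $L\lesssim\log\NSG$), then absorb the resulting $L^{\gamma_w(q_1+q_2)}$ factors into the shifted exponents $\tilde t_i^w$. The paper only writes out the FTP case explicitly and declares the STP cases ``analogous,'' whereas you spell out all four STP cases, including the extra $+\gamma_1$ from the $\sigma=1$ cost bound and the $+1$ from the borderline rate equality, exactly as intended.
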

		\begin{proof}
			\textcolor{commentcolor}{
			For the estimate for $\EFG$ we use a similar trick as Harbrecht and Griebel \cite{GriebelHarbrecht2011}: The estimate 
			\begin{align*}
				\NFG \asymp 2^{L(\sigma+\frac{1}{\sigma})}L^{q_1+q_2}
			\end{align*}
			from Theorem \ref{TPsize} implies 
			\begin{align*}
				L\asymp\log(\NFG)-(q_1+q_2)\log(L)\lesssim\log(\NFG)
			\end{align*}
			and hence 
			\begin{align*}
				&\frac{\NFG}{\log(\NFG)^{q_1+q_2}}\lesssim 2^{L(\sigma+1/\sigma)}\\
				\Leftrightarrow 2^{-L\min(s_1/\sigma,\sigma\alpha s_2)}=&\left(2^{L(\sigma+1/\sigma)}\right)^{-\gamma_\infty}\lesssim\left(\frac{\NFG}{\log(\NFG)^{q_1+q_2}}\right)^{-\gamma_\infty}\,.
			\end{align*}
			Plugging these estimates into the result from Theorem \ref{TPconvergence} gives the desired estimate for $\EFG$. Analogously we obtain the results for $\ESG$ from the Theorems \ref{TPsize} and \ref{STPconvergence} and the Remark \ref{LowerBoundNSG}.}
		\end{proof}
		Finally, we identify the optimal $\sigma$ to balance error bounds of $\QuadI_{l_1}$ and $\QuadII_{l_2}$ and get an optimal joint convergence rate.
		\begin{theorem} (Optimal $\sigma$ for full and sparse tensor product quadrature)\\
			Both, $\QFG$ and $\QSG$, achieve their best error bound for
			\begin{align} \label{OptimalSigma}
				\sigma^*=\sqrt{\frac{s_1}{\alpha s_2}}\,.
			\end{align}
			If $\kappa:=\frac{s_1}{\alpha s_2}\neq 1$, then any $\sigma$ with $\sigma^2\in[1,\kappa]$ or $\sigma^2\in[\kappa,1]$, respectively, is optimal for $\QSG$. The optimal exponents are then
			\begin{align}
				\gamma_\infty^*=\frac{\alpha s_1s_2}{s_1+\alpha s_2}\quad \mbox{and} \quad
				\gamma_1^*=\min\{s_1,\alpha s_2\}\,.
			\end{align}
		\end{theorem}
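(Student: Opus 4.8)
The plan is to read off from Corollary \ref{corollary} that, up to logarithmic factors, the errors behave like $\EFG = O((\NFG)^{-\gamma_\infty})$ and $\ESG = O((\NSG)^{-\gamma_1})$ with
$$\gamma_\infty(\sigma) = \frac{\min\{s_1/\sigma,\ \alpha s_2\sigma\}}{\sigma + 1/\sigma}, \qquad \gamma_1(\sigma) = \frac{\min\{s_1/\sigma,\ \alpha s_2\sigma\}}{\max\{\sigma,\ 1/\sigma\}}.$$
Since the algebraic exponent dominates the sub-polynomial $\log$-factors asymptotically, obtaining the best error bound reduces to maximizing $\gamma_\infty$ and $\gamma_1$ over $\sigma>0$. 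In both cases the numerator is piecewise of the form $s_1/\sigma$ or $\alpha s_2\sigma$, switching at the point $\sigma=\sqrt{\kappa}$ where $s_1/\sigma = \alpha s_2\sigma$; this is what makes $\sigma^*=\sqrt{\kappa}$ the natural candidate.

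For the full tensor product rule I would split $(0,\infty)$ at $\sigma=\sqrt{\kappa}$. On $[\sqrt{\kappa},\infty)$ the minimum equals $s_1/\sigma$, so $\gamma_\infty = s_1/(\sigma^2+1)$, which is strictly decreasing; on $(0,\sqrt{\kappa}]$ the minimum equals $\alpha s_2\sigma$, so $\gamma_\infty = \alpha s_2\sigma^2/(\sigma^2+1)$, which is strictly increasing. Hence both branches are maximized at the common junction $\sigma=\sqrt{\kappa}$, and evaluating either expression there yields the unique maximizer $\sigma^*=\sqrt{\kappa}$ with value $\gamma_\infty^* = \alpha s_1 s_2/(s_1+\alpha s_2)$, as claimed.

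For the sparse tensor product rule the analysis is more delicate, because the numerator has its kink at $\sigma=\sqrt{\kappa}$ while the denominator $\max\{\sigma,1/\sigma\}$ has its kink at $\sigma=1$. To reduce the casework I would first exploit the reciprocal symmetry $\sigma\mapsto 1/\sigma$, which interchanges the roles of $s_1$ and $\alpha s_2$ (and sends $\kappa\mapsto 1/\kappa$), so that it suffices to treat $\kappa\geq 1$. Then I would analyze $\gamma_1$ on three intervals: on $(0,1]$ one has $\gamma_1 = \alpha s_2\sigma^2$ (increasing), on $[1,\sqrt{\kappa}]$ one has $\gamma_1 \equiv \alpha s_2$ (constant), and on $[\sqrt{\kappa},\infty)$ one has $\gamma_1 = s_1/\sigma^2$ (decreasing). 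Consequently the maximal value $\gamma_1^* = \alpha s_2 = \min\{s_1,\alpha s_2\}$ is attained exactly on the plateau $\sigma^2\in[1,\kappa]$, and the symmetric case $\kappa\leq 1$ gives $\sigma^2\in[\kappa,1]$ with value $s_1$. In either case $\sigma^*=\sqrt{\kappa}$ lies in the optimal interval, which simultaneously establishes all assertions of the theorem.

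I expect the main obstacle to be the sparse grid case, precisely because the optimizer is an entire interval rather than a single point: this plateau appears only because the kink of the numerator (at $\sqrt{\kappa}$) and the kink of the denominator (at $1$) are separated and $\gamma_1$ happens to be flat between them. Careful bookkeeping of which argument attains the $\min$ in the numerator and the $\max$ in the denominator across each of the (up to) four subregions, together with the reciprocal symmetry to halve the work, is the delicate part; the full tensor product case, by contrast, is a routine one-variable monotonicity argument.
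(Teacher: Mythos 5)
Your proposal is correct and follows essentially the same route as the paper: both reduce the theorem to maximizing the exponents $\gamma_\infty$ and $\gamma_1$ from Corollary \ref{corollary} (treating the logarithmic factors as negligible) and then perform a piecewise analysis at the breakpoints $\sigma^2=\kappa$ and $\sigma^2=1$, the paper via the rewriting $\gamma_1=\alpha s_2\min\{\kappa,\sigma^2\}\min\{1,1/\sigma^2\}$ with three cases, you via interval-by-interval monotonicity. Your reciprocal-symmetry trick $\sigma\mapsto 1/\sigma$ and the explicit monotonicity argument for $\gamma_\infty$ are just a slightly more detailed organization of the same case analysis, not a different method.
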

		\begin{proof}
			In order to achieve optimal bounds, we have to maximize $\gamma_\infty$ and $\gamma_1$. Here $\gamma_\infty$ is maximized if $s_1/\sigma=\alpha s_2\sigma$, i.e. $\sigma^2=\kappa$. For $\gamma_1$, we have
			\begin{align} \label{optimal gamma formula}
				\gamma_1^*:=\max_{\sigma>0}\gamma_1=\max_{\sigma>0}\left(\alpha s_2\min\{\kappa,\sigma^2\}\min\{1,\frac{1}{\sigma^2}\}\right)\,.
			\end{align}
			For $\kappa<1$, i.e. $s_1<\alpha s_2$, we distinguish the cases (I) $\sigma^2<\kappa$, (II) $\kappa\leq\sigma^2\leq1$ and (III) $\sigma^2>1$ and have $\gamma_1^*=s_1$ for (II) and $\gamma_1^*<s_1$ for (I) and (III). Similar cases result from $\kappa>1$ with $\gamma_1^*$ maximal for $1\leq\sigma^2\leq\kappa$. For $\kappa=1$, i.e. $s_1=\alpha s_2$, we get $\gamma_1=s_1$ since (\ref{optimal gamma formula}) is then maximized by $\sigma=1$.
		\end{proof}
		All of the above theorems as well as Corollary \ref{corollary} can be easily adjusted to cover the usage of the product rule for one or both domains and to provide corresponding results, i.e. we can extend them to functions in isotropic Sobolev spaces. \textcolor{commentcolor}{F}or example, in the case of product rules for both domains we have for FTP quadrature
		\begin{align*}
			\NFG=O\left(2^{L(d_1/\sigma+d_2\sigma)}\right)\,,
		\end{align*}
		and for STP quadrature
		\begin{align*}
			\NSG=
			\begin{cases}
				O\left(2^{Ld_2\sigma}\right) &\text{ for }\sigma^2<d_1/d_2\,,\\
				O\left(2^{Ld_1/\sigma}\right) &\text{ for }\sigma^2>d_1/d_2\,,\\
				O\left(\frac{L}{\sigma}2^{Ld_2\sigma}\right) &\text{ for }\sigma^2=d_1/d_2\,.\\
			\end{cases}
		\end{align*}
		Moreover, we have similar exponents when we combine the product rule on one domain with one of the other previously mentioned rules on the second domain and vice-versa. This means that the size of both, FTP and STP quadrature, depends on the dimension of one or both domains if the product rule is involved, and therefore the curse of dimensionality transfers to the double quadrature. However, the STP quadrature still improves the FTP quadrature by reducing the number of nodes significantly.
		\\
		Similarly, error bounds for FTP and STP quadrature based on the product rule can be derived via the same steps as in the proofs for Theorems \ref{TPconvergence} and \ref{STPconvergence}. However, combined with the corresponding numbers of nodes of the rules, we then observe that the curse of dimensionality highly affects the cost complexities and makes the product rule prohibitively slow for high-dimensional integration domains. Hence, we abstain here from presenting the proofs for the product rule cases and instead only focus on the more promising quadrature rules on mixed Sobolev spaces. 
				
	\section{Discrete Choice Models and Estimation}\label{DCM}
		In the following, we derive the \textit{Mixed Logit} and the \textit{Multinomial Probit} model as popular specifications for Discrete Choice models (DCM) to obtain two test functions which fit to the setting described in section \ref{MLE-theory}. DCM are used to understand how individuals $n=1,...,N$ choose between alternatives $i=1,...,J$. For each alternative, we want to find a \textit{choice probability} $p(y_n=i|z_n,u)$ as a function of the observed attributes $(y_n,z_n)$ for each individual $n$ and a parameter vector $\th$, where $y_n$ is the observed decision and $z_n$ is a vector of observed attributes of individual $n$.
		\\
		Discrete Choice models have been used for many years in different branches of econometrics: Research applications include the analysis of market equilibria \cite{BLP1995}, transportation (\cite{Bhat1998}, both for Mixed Logit) or debt crisis in developing countries (\cite{hajivassiliou1994b}, for Multinomial Probit). Train \cite{Train2009} gives a comprehensive overview of various models, applications, estimation techniques and respective numerical methods.
		\\
		In terms of econometric model classes, Mixed Logit models and (Mixed) Multinomial Probit models can be described as Generalized Linear Mixed models (GLMM). For many GLMM, due to a so-called \textit{mixing distribution}, the estimation requires the computation of a multi-dimensional integral. As this integral has often no analytical solution it needs to be approximated. A survey of the optimal quadrature rules for several GLMM and their dependence on the tested parameter set can be found in \cite{Gilch2020}.
		\\
		Each individual makes his choice according to an (to the researcher) unobservable utility measure $V(z,u)\in\R^J$ and the alternative with maximal utility is chosen. Let $u\in\OMM\subset\R^q$ be a vector of parameters for the utility measure $V$ and let $z\in\OM\subset\R^{J\times q}$ be a vector (or matrix) of observed exogenous variables. Here, $\OM$ is the domain for the exogenous variable $z$ and we require that the model is identified. Furthermore, we suppose there are unobservable factors or errors $\varepsilon\in\R^J$ which affect the utility and are distributed according to a known (or assumed) distribution. The common utility function $V:\OM\times\OMM\ra\R^J$ is linear in $z$ and $u$ and additive in $\varepsilon$,
		\begin{align} \label{LinearModel}
			V(z,u)=zu+\varepsilon\,.
		\end{align}
		An individual chooses alternative $y=i$ exactly if $V_i(z,u)>V_j(z,u)$ for all $j\neq i$, where $V_i$ are the components of the vector $V$, i.e. the utility of the particular choice $i$. In order to find the choice probabilities for a given parameter vector $u$ and exogenous variable $z$, i.e.
		\begin{align}\label{ChoiceProb}
			P(y=i|z,u)=P\left((zu)_i-(zu)_j> \varepsilon_j-\varepsilon_i\ :\ \forall j\neq i\right)\,,
		\end{align}
		we need to propose a distribution for $\varepsilon$. The Mixed Logit model assumes an extreme value distribution with p.d.f. 
		\begin{align*}
			\zeta(\varepsilon_j)=e^{-\varepsilon_j}e^{-e^{-\varepsilon_j}}
		\end{align*}
		for $j=1,...,J$, while the Multinomial Probit model uses a multivariate Gaussian distribution $\Normal(0,\Sigma) $ with covariance matrix $\Sigma\in\R^{J\times J}$.
		\\
		The choice probabilities for Mixed Logit are based on the choice probabilities of the more basic Logit model which assumes a fixed parameter vector $u$ for every individual $n$: Given the error $\varepsilon_i$, we have
		\begin{align*}
			P(y=i|z,u,\varepsilon_i)
			=P\left((zu)_i-(zu)_j+\varepsilon_i> \varepsilon_j : \forall j\neq i\right)
			=\prod_{j\neq i}e^{-e^{-(\varepsilon_i+(zu)_i-(zu)_j)}}
		\end{align*}
		and then integrate over the distribution of $\varepsilon_i$ to obtain
		\begin{align*}
			P(y=i|z,u)
			&=\int_{\R} P(y=i|z,u,\varepsilon_i)e^{-\varepsilon_i}e^{-e^{-\varepsilon_i}}d\varepsilon_i =\frac{e^{(zu)_i}}{\sum_{j=1}^{J}e^{(zu)_j}}\,.
		\end{align*}
		In the standard Logit model, the goal of estimation is to find an optimal parameter vector $u$ which fits to the observed data $(y_n,z_n)$ for all individuals $n=1,..,N$. In contrast, the Mixed Logit model allows to model individuals with distinctive tastes, i.e. individual parameter vectors $u_n$. 
		\\
		Yet trying to find a complete set $\{u_n|\ n=1,...,N\}$ of parameters is computationally far too expensive. Instead, it is assumed that the $u_n$ are realizations of a random variable $U$ with density function $h(u|u_0,\Sigma)$, mean $u_0\in\R^q$ and covariance matrix $\Sigma\in\R^{q\times q}$. Then, the random taste is integrated out for each individual to get the choice probability
		\begin{align} \label{MixLogitInt}
			P(y=i|z,\th)=\int_{\R^q}\frac{e^{(zu)_i}}{\sum_{j=1}^{J}e^{(zu)_j}}h(u|u_0,\Sigma)du\,.
		\end{align}
		This constitutes the Mixed Logit choice probabilities for $i=1,...,J$, where the parameter vector $\th=(u_0,\Sigma)$ fully specifies the Mixed Logit model given a density $h$. Given assumptions on $\Sigma$, the parameter search is then reduced to $\th\in\Theta\subset\R^q$. The density is often also called mixing distribution and assumed to be a Gaussian. This leads to an integral in (\ref{MixLogitInt}), which is not analytically solvable. Other mixing distributions are possible but McCulloch and Searle \cite{mcculloch2001G-L-M-M} point out that the choice of the distribution seems to have only marginal effect on the model performance.
		
		The Multinomial Probit model is more directly derived from (\ref{ChoiceProb}). In order to fit the model notationally to the setting of the definitions (\ref{ExtremumEstimator})-(\ref{DoubleIntegral}), we rewrite (\ref{LinearModel}) as
		\begin{align} \label{LinearModel-2}
			V(z,\th)=z\th+u\,,
		\end{align}
		so that now $u\in\OMM\subset\R^J$ is the unobservable residual or error and $\th\in\Theta\subset\R^q$ is the parameter vector to be estimated.
		\\
		Since only the differences in utility affect the choice, we can define $\tilde{V}_{ij}=V_i-V_j$, $\tilde{W}_{ij}=(z\th)_i-(z\th)_j$ and $\tilde{u}_{ij}=u_i-u_j$ for $i\neq j$ and rewrite (\ref{ChoiceProb}) (with the swapped notation $u\rightarrow\th$ and $\varepsilon\rightarrow u$) as
		\begin{align} \label{ProbitChoiceProbwithDiff}
			P(y=i|z,\th)=P\left(\tilde{V}_{ij}>0\ :\ \forall j\neq i\right)\,.
		\end{align}
		The vector $\tilde{u}_i= (\tilde{u}_{i1},...,\tilde{u}_{i(i-1)},\tilde{u}_{i(i+1)},...,\tilde{u}_{iJ})^T \in\R^{J-1}$ is normally distributed with covariance matrix $\tilde{\Sigma}_i$ derived from $\Sigma$. Then (\ref{ProbitChoiceProbwithDiff}) evaluates to
		\begin{align} \label{ProbitChoiceProb}
			P(y=i|z,\th)=\int_{\R^{J-1}}\mathbf{1}_{\{\tilde{W}_{ij}+\tilde{u}_{ij}>0\ \forall j\neq i\}}\phi(\tilde{u}_i)d\tilde{u}_i=\Phi(\tilde{W}_i)\,,
		\end{align} 
		where $\phi$ is the p.d.f. and $\Phi$ is the c.d.f. of $\tilde{u}_i$. The multivariate c.d.f. $\Phi$ cannot be computed analytically for non-trivial $\Sigma$ and also has to be approximated numerically. In contrast to the Mixed Logit model, this means that the integral does not stem from a mixing distribution $h$ but directly from the assumed probability distribution for the error in the utility function.
		\\
		For Panel or cluster data, the within-cluster and -series correlation within the Probit model is usually already expressed by the freely chosen covariance matrix $\Sigma$. Hence, an additional mixing distribution is rarely used in practice. It is mentioned, e.g.\ in \cite{mcculloch2001G-L-M-M} and \cite{Train2009}, as possibly beneficial but it is numerically less feasible. 
		\\
		However, \textcolor{commentcolor}{similarly} to the Mixed Logit model, we would assume for a Mixed Probit model that $\th$ is in fact different for every individual $n$ so we have draws $\th_n$ from a distribution $h$ as distinctive parameter vectors for every individual. Integrating over $\th$, we obtain the Mixed Probit choice probability
		\begin{align} \label{MixedProbit}
			P(y=i|z_n,\Sigma,\th_0,\Psi)=\int_{\Theta}P(y=i|z,\th)h(\th|\th_0,\Psi)d\th
		\end{align}
		for an individual $n$, covariance matrix $\Sigma$ of the multivariate Gaussian distribution and a mixing distribution $h$ with mean $\th_0$ and covariance matrix $\Psi$. Together with the integral (\ref{ProbitChoiceProb}) for the choice probability, (\ref{MixedProbit}) constitutes a double integral which can also be written in the form (\ref{DoubleIntegral}).
		\\
		Given a certain, parameterized econometric model and a set of observations for the associated economic situation, the researcher is now interested in finding the optimal parameter $\th$ that fits the model to the data. Extremum estimators find this parameter by maximizing an objective function $R_N$ which incorporates the model structure for an observed data set. In the context of DCM, this data set usually consists of observed features $z_n$ and decisions $y_n$ of individuals or firms.
		\\
		The most popular extremum estimator, the Maximum Likelihood estimator, is simply given by
		\begin{align} \label{MLEstimator}
			R^\text{ML}_N(\th):=G_N(\th)=\frac{1}{N}\sum_{n=1}^{N}m(z_n|\th)
		\end{align}
		where the moment function is defined as
		\begin{align} \label{ML-momentfunction}
			m(z_n|\th):=\sum_{j=1}^{J}y_{n,j}\log P(y=j|z_n,\th)\,,
		\end{align}
		with $y_{n,j}=1$ if individual $n$ chooses alternative $j$ and 0 otherwise. $G_N(\th)$ is often denoted by $\ell(\th)$ and called the \textit{likelihood} of the parameter vector $\th$ (given the observations $z_n$). It can be considered as the approximation of the expected value
		\begin{align*}
			\OG(\th):=\E_z[m(z|\th)]
		\end{align*}
		with respect to the real world data space $\OM$ and the data points $z$ within.
		\\
		Yet the computation of the likelihood requires knowledge or well justified assumptions on the distribution of the unobservable variables ($\varepsilon$ for Mixed Logit and $u$ for Multinomial Probit). If this requirement cannot be satisfied, the \textit{Generalized Method of Moments (GMM) estimator} provides a less restrictive alternative. It is derived from solving estimating equations
		\begin{align} \label{GMMequation}
			\OG(\th)=\E_z[m(z|\th)]=0
		\end{align}
		where the moment function is constructed from some orthogonality condition arising from the examined model. Then, the objective function in terms of extremum estimators (see (\ref{ExtremumEstimator})) is given by
		\begin{align*}
			R^\text{GMM}_N(\th):=-||G_N(\th)||=-\left|\left|\frac{1}{N}\sum_{n=1}^{N}m(z_n|\th)\right|\right|
		\end{align*}
		and maximized with respect to the parameter space $\Theta\ni\th$, $\Theta\subset\R^q$. Sometimes the Euclidean norm $||\cdot||$ is replaced by a $W$-norm $||\cdot||_W$ for some symmetric positive definite matrix $W$.
		\\
		If a choice probability $P_{n,j}(\th):=P(y=j|z_n,\th)$ is available, as for DCM, then the most efficient GMM estimator is derived from the Maximum Likelihood moment function (\ref{ML-momentfunction}) by taking the derivative and using the identity
		\begin{align*}
			0=\sum_{j=1}^{J}\nabla_\th P_{n,j}(\th)
		\end{align*}
		so that
 		\begin{align} \label{GMM-momentfunction}
			m(z_n|\th)=\sum_{j=1}^{J}\left(\frac{y_{n,j}}{P_{n,j}(\th)}-1\right)\nabla_\th P_{n,j}(\th)\,,
		\end{align}
		which is a $q$-dimensional vector-valued moment function. Other moment functions are used if the model has weaker or different assumptions.
		\\
		Since both models define choice probabilities, Maximum Likelihood would be the natural choice for estimation. However, the fact that we cannot compute the choice probabilities exactly calls for the use of approximated extremum estimators, where the objective function $m$ is approximated. Then, under certain circumstances, GMM estimation might be the better option in terms of consistency of estimation (see \cite{GriebelOettershagen2019}, \cite{hajivassiliou1994classical} and \cite{McFadden1989}). In the next section, we test both models and estimators in order to obtain numerical results for several use cases.	
		
	\section{Numerical Results} \label{MLE-numerics}
		This section is devoted to the validation of the previously obtained results on the convergence order of FTP and STP quadrature. We present numerical results for a synthetic test function and for the two exemplary integrands from Maximum Likelihood and GMM estimation introduced in the previous section. Finally, we also examine the integral arising from a Mixed Probit model.
		\\
		Firstly, we have to establish the proper measure for accuracy of the developed quadrature rules. While the true value of the integral is available for the synthetic test function, we use as error measure for the other three integrals the relative error
		\begin{align*}
			\tilde{E}_L(F_{\innerintegrand}):=\frac{|Q_L(F_{\innerintegrand})-Q_{L-1}(F_{\innerintegrand})|}{|Q_L(F_{\innerintegrand})|}
		\end{align*}
		for $Q_L=\QFG$ or $Q_L=\QSG$ respectively.
				
		\begin{table}
			\begin{adjustwidth}{-4in}{-4in}  
				\begin{center}
					\renewcommand{\arraystretch}{1.3}
					\begin{tabular}{c||c|c|c}
						$\Qi=$ & MC & QMC & SG/Frolov/optimal weights \\
						\hline
						$e\left(\Qi,\mathcal{H}\right)=$ & $O\left(\Ni^{-1/2}\right)$ & $O\left(\Ni^{-1}\log(\Ni)^{d_i}\right)$ & $O\left(\Ni^{-s_i}\log(\Ni)^{t_i(s_i,d_i)}\right)$\\
						\hline
						$\mathcal{H}=$ & $L^2(\Omega_i)$ & $H^1_{\text{mix}}(\Omega_i)$ & $\SobolevspaceI$
					\end{tabular}
				\end{center}
			\end{adjustwidth}	
			\hspace{.2\textwidth}
			\caption[caption]{Asymptotic convergence rates for several multi-dimensional quadrature rules on a bounded $d_i$-dimensional integration domain. For Frolov cubature a zero boundary conditions is assumed. The exponent $t_i(s_i,d_i)$ serves as place holder for the different secondary rates of SG, Frolov and optimal weights quadrature.}
			\label{tab:ConvRatesQi}
		\end{table}
		We rely on several types of quadrature rules for multi-dimensional integrals on bounded domains which we employ as $\Qi$, $i=1,2$ (their convergence rates and the respective regularity assumptions are summarized in Table \ref{tab:ConvRatesQi} and conform with our definition in (\ref{ErrorBounds Qi})): Monte Carlo (MC) integration determines its quadrature nodes by randomly drawing samples from the integration domain and uses the uniform weight $\wni=1/\Ni$ for $n_i=1,...,\Ni$. It achieves a probabilistic rate of $O(\Ni^{-1/2})$ for $L^2$-integrable functions, independent of the dimensionality $d_i$ of the integration domain. Quasi Monte Carlo (QMC) integration is designed to recreate this independence of $d_i$ in a deterministic fashion by constructing the nodes from so-called \textit{low discrepancy series} \cite{DickKuoSloan2013}. Popular QMC rules are the Sobol- and the Halton-rules which both achieve convergence rates of $O(\Ni^{-1}\log(\Ni)^{d_i})$ for functions in $H^1_{\text{mix}}(\Omega_i)$ but in general no faster convergence for functions in $H^{r_i}_{\text{mix}}(\Omega_i)$ with $r_i>1$ can be obtained, i.e. $s_i=1$. Another approach for QMC rules are lattice rules which include the Frolov cubature method \cite{KaOeUlUl2018}. Frolov cubature achieves the rate $O(\Ni^{-r_i}\log(\Ni)^{(d_i-1)/2})$ for integrands in $\mathring{H}^{r_i}_{\text{mix}}(\Omega_i)$, i.e. for integrands with zero boundary, hence we have $s_i=r_i$ for Frolov cubature. All QMC rules use the same uniform weight $\wni=1/\Ni$ as MC integration. 
		\\
		In contrast, the so-called optimal weights cubature uses information about the integration domain and the function space of the considered integrands to compute optimal weights for a given set of nodes \cite{Oettershagen2017}. This way, the rather slow rate of MC and QMC quadrature can be improved significantly for functions in $\SobolevspaceI(\Omega_i)$. As yet, at least the upper bound $O(\Ni^{-r_i}\log(\Ni)^{r_id_i})$ for $r_i>1/2$ has been proven for the optimal weights MC case \cite{KaUlVo19}, whereas the lower bound $O(\Ni^{-r_i}\log(\Ni)^{(d_i-1)/2})$ is known for general best weighted sampling. This is a major improvement w.r.t.\ previous (Q)MC approaches and also captures the main rate $\Ni^{-r_i}$. Optimal weights cubature allows us to enhance MC quadrature when we have no option to obtain quadrature nodes in a systematic way, e.g.\ if the nodes are observations or samples from simulations. To this end, for the optimal weights, we just need to solve a linear equation system involving the reproducing kernel of the underlying Hilbert space $\SobolevspaceI(\Omega_i)$, which is to be present in our respective regularity assumption. For further details see \cite{Oettershagen2017}.
		\\
		Finally, SG quadrature creates a rule for multi-dimensional integrals from rules on one-dimensional domains by using only certain points of the tensor product of the one-dimensional rules. This way, the curse of dimensionality can be circumvented to some extent and an upper bound $O(\Ni^{-s_i}\log(\Ni)^{(d_i-1)(s_i+1)})$ of the rate  is achieved \cite{NovakRitter1996}, \cite{Wasilkowski1995}. Note at this point that this common but suboptimal upper bound was recently improved in \cite{DungUllrich2015} to $O(\Ni^{-s_i}\log(\Ni)^{(d_i-1)(s_i+1/2)})$ and it was shown that this is also a \textit{lower} bound of the quadrature error, i.e. it is thus the optimal rate of the SG approach. Therefore, compared to optimal weights MC quadrature, SG quadrature is asymptotically inferior for $(d_i-1)(s_i+1/2)>s_id_i$, i.e. for $d_i>2s_i+1$.
		\\
		The error rate of the SG quadrature depends highly on the underlying one-dimensional rule which constitutes the basis for the sparse grid construction. Its order of accuracy $p_i$ determines with the formula $s_i=\min\{r_i,p_i\}$ whether the SG rule can make full use of the provided regularity $r_i$ of the integrand. For classical one-dimensional Gaussian rules like the Gauss-Hermite or the Gauss-Laguerre rule, we have $p_i=r_i$ for any $r_i\geq 2$, i.e. Gaussian rules always achieve the maximally possible rate \cite{GerstnerGriebel1998}. On the other hand, Newton-Cotes formulas have fixed $p_i$ depending on their construction, e.g. $p_i=2$ for the trapezoid rule. For the Clenshaw-Curtis rules the order of accuracy $p_i=r_i$ for any $r_i\geq 2$ similar to the Gaussian rules could be observed and was proven for $r_i$-times continuously differentiable functions \cite{brasspetras2011}.
		\\
		We are interested in achieving optimal convergence rates, hence in this paper we only consider Gaussian and Clenshaw-Curtis rules as basis for the SG quadrature. Therefore, we have $s_i=r_i$ in the convergence rate of SG quadrature as presented in Table \ref{tab:ConvRatesQi}. This holds similar for Frolov cubature and optimal weights quadrature as they both also achieve the maximally possible main rate $s_i=r_i$.
		\\
		Most of these quadrature rules are designed for integration on the unit cube $\Omega_i=[0,1]^{d_i}$ and can be \textcolor{commentcolor}{linearly transformed to any bounded hypercube} without ramifications for the convergence behavior. However, we also want to examine integrals on unbounded domains, e.g. on $\Omega_i=\R^{d_i}$ or $\Omega_i=[0,\infty)^{d_i}$. In general, we consider two approaches to deal with these integrals: First, if the integrand includes a weight function $\omega_i(x_i)$ on $\Omega_i$, some of the previously mentioned quadrature rules offer adaptations for certain combinations of an integration domain $\Omega_i$ and a weight function $\omega_i$. These adaptations include MC sampling with density function $\omega_i$ (if $\omega_i$ is a p.d.f.\ on $\Omega_i$) and SG quadrature based on certain Gaussian rules. For example, the Gauss-Hermite rule corresponds to the weight function $\omega_i(x_i)=e^{-x_i^2}$ and $\Omega_i=\R$ whereas the Gauss-Laguerre rule corresponds to the weight function $\omega_i(x_i)=e^{-x_i}$ and $\Omega_i=[0,\infty)$. Hence, we can construct SG quadratures from these one-dimensional rules which are defined on $\Omega_i=\R^{d_i}$ or $\Omega_i=[0,\infty)^{d_i}$ with the respective multi-dimensional equivalents of the associated weight functions. The second approach to deal with integrals on unbounded domains is to transform these domains to the unit cube $[0,1]^{d_i}$. Yet depending on the used transformation and the behavior of the original integrand for $x_i\ra\infty$, $x_i\in\Omega_i$, this approach may produce boundary singularities for the transformed integrand. These boundary singularities can drastically reduce the regularity of the integrand and hence prevent us from achieving high convergence rates with higher-order quadrature rules.
		\\
		A recently developed extension of SG quadrature based on Gaussian rules is designed to handle this issue. Oettershagen and Griebel \cite{GriebelOettershagen2014} propose to use SG quadratures which are based on a generalized Gauss formula. This formula is generated similarly to conventional Gaussian formulas with the only difference that polynomials in $\psi(x_i)$ are used instead of polynomials in $x_i$ to compute nodes and weights. In \cite{GriebelOettershagen2014} it was proposed to use $\psi(x_i)=-\log(1-x_i)$, $\psi(x_i)=\arcsinh(2\arctanh(x_i)/\pi)$ or $\psi(x_i)=\erf^{-1}(x_i)$, depending on whether the transformation induces singularities at one or both boundaries of $[0,1]$. This way, $r_i$-times differentiable functions with boundary singularities are included in the space of functions and a main rate of $O(\Ni^{-r_i})$ is achieved for such Gauss-based quadratures. This property is preserved for multidimensional integrands and SG quadrature.
		\\	
		\begin{sidewaystable}
			\footnotesize
			\begin{adjustwidth}{-4.2in}{-4in}  
				\begin{center}
					\renewcommand{\arraystretch}{1.45}
					\begin{tabular}{c|c|c|Sc}
						\backslashbox[60mm]{$\QuadII_{l_2}=$}{$\QuadI_{l_1}=$} & MC & QMC & SG/Frolov/optimal weights \\
						\hline
						MC & $O((\NFG)^{-1/4})$ & $O((\NFG)^{-1/3}\log(\NFG)^{d_1})$ & $O((\NFG)^{-\frac{s_1}{2s_1+1}}\log(\NFG)^{\tilde{t}^\infty_1})$\\
						QMC & $O((\NFG)^{-1/3}\log(\NFG)^{d_2})$ & $O((\NFG)^{-1/2}\log(\NFG)^{d_1+d_2})$ & $O((\NFG)^{-\frac{s_1}{s_1+1}}\log(\NFG)^{\tilde{t}^\infty_1+d_2})$ \\ 
						SG/Frolov/optimal weights & $O((\NFG)^{-\frac{s_2}{2s_2+1}}\log(\NFG)^{\tilde{t}^\infty_2})$ & $O((\NFG)^{-\frac{s_2}{s_2+1}}\log(\NFG)^{d_1+\tilde{t}^\infty_2})$ & $O((\NFG)^{-\frac{s_1s_2}{s_1+s_2}}\log(\NFG)^{\tilde{t}^\infty_1+\tilde{t}^\infty_2})$
					\end{tabular}
				\end{center}
			\end{adjustwidth}
			\hspace{.2\textwidth}
			\caption[caption]{Theoretical convergence rates for FTP quadrature with optimal $\sigma^*$ and $\alpha=1$, where $s_i=\min\{r_i,p_i\}$ for the order of accuracy $p_i$ of the respective quadrature.}
			\label{tab:ConvRatesFG}
			\begin{adjustwidth}{-4in}{-4in}  
				\begin{center}
					\renewcommand{\arraystretch}{1.3}
					\begin{tabular}{c|c|Sc|Sc}
						\backslashbox[55mm]{$\QuadII_{l_2}=$}{$\QuadI_{l_1}=$} & MC & QMC & SG/Frolov/optimal weights \\
						\hline
						MC & - & $O\left((\NSG)^{-1/2}\log(\NSG)^{d_1+1}\right)$ & $O\left((\NSG)^{-1/2}\log(\NSG)^{\tilde{t}^1_1+1}\right)$\\
						QMC & $O\left((\NSG)^{-1/2}\log(\NSG)^{d_2+1}\right)$ & - & $O\left((\NSG)^{-1}\log(\NSG)^{\tilde{t}^1_1+d_2+1}\right)$ \\ 
						SG/Frolov/optimal weights & $O\left((\NSG)^{-1/2}\log(\NSG)^{\tilde{t}^1_2+1}\right)$ & $O\left((\NSG)^{-1}\log(\NSG)^{d_1+\tilde{t}^1_2+1}\right)$ & \makecell{$O\Big((\NSG)^{-\min(s_1,s_2)}\times$\\$\log(\NSG)^{\tilde{t}^1_1+\tilde{t}^1_2+1}\Big)$}
					\end{tabular}
				\end{center}
			\end{adjustwidth}
			\hspace{.2\textwidth}
			\caption[caption]{Theoretical convergence rates for STP quadrature with optimal $\sigma^*\neq1$, $\alpha=1$ and $s_1/\sigma^*=s_2\sigma^*$, where $s_i=\min\{r_i,p_i\}$ for the order of accuracy $p_i$ of the respective quadrature. (The two empty cells are due to $s_1/\sigma^*=s_2\sigma^*$ implying $\sigma^*=1$ in these cases, contrary to the assumption for this table.)}
			\label{tab:ConvRatesSGneq1}
			\begin{adjustwidth}{-4in}{-4in}  
				\begin{center}
					\renewcommand{\arraystretch}{1.3}
					\begin{tabular}{c|Sc|Sc|Sc}
						\backslashbox[55mm]{$\QuadII_{l_2}=$}{$\QuadI_{l_1}=$} & MC & QMC & SG/Frolov/optimal weights \\
						\hline
						MC & $O\left((\NSG)^{-1/2}\log(\NSG)^{3/2}\right)$ & $O\left((\NSG)^{-1/2}\log(\NSG)^{d_1+1/2}\right)$ & $O\left((\NSG)^{-1/2}\log(\NSG)^{\tilde{t}^1_1+1/2}\right)$\\
						QMC & $O\left((\NSG)^{-1/2}\log(\NSG)^{d_2+1/2}\right)$ & $O\left((\NSG)^{-1}\log(\NSG)^{d_1+d_2+2}\right)$ & $O\left((\NSG)^{-1}\log(\NSG)^{\tilde{t}^1_1+d_2+1}\right)$ \\ 
						SG/Frolov/optimal weights & $O\left((\NSG)^{-1/2}\log(\NSG)^{\tilde{t}^1_2+1/2}\right)$ & $O\left((\NSG)^{-1}\log(\NSG)^{d_1+\tilde{t}^1_2+1}\right)$ & \makecell{$O\Big((\NSG)^{-\min(s_1,s_2)}\times$\\$\log(\NSG)^{\tilde{t}^1_1+\tilde{t}^1_2+\min(s_1,s_2)}\Big)$}
					\end{tabular}
				\end{center}
			\end{adjustwidth}
			\hspace{.2\textwidth}
			\caption[caption]{Theoretical convergence rates for STP quadrature where $\sigma^*=1$ is optimal (so an additional summand $\gamma_1^*=\min(s_1,s_2)$ is introduced in the $\log$-exponent), $\alpha=1$, $s_1\neq s_2$ (for $s_1= s_2$  the $\log$-exponent in the lower right cell is increased by 1) and $s_i=\min\{r_i,p_i\}$ for the order of accuracy $p_i$ of the respective quadrature.}
			\label{tab:ConvRatesSG=1}
		\end{sidewaystable}
		Based on the results of Theorems \ref{TPconvergence} and \ref{STPconvergence}, the Tables \ref{tab:ConvRatesFG}, \ref{tab:ConvRatesSGneq1} and \ref{tab:ConvRatesSG=1} now give an overview of the expected convergence rates for various combinations of $\QuadI_{l_1}$ and $\QuadII_{l_2}$. In general, we see that the overall convergence rate is always bounded by the smaller of the two rates of $\QuadI_{l_1}$ and $\QuadII_{l_2}$. In cases where the rate for one of the two integrals is bounded, e.g. by the regularity of the integrand or because quadrature nodes can only be obtained by observations, this rate determines the maximal combined convergence rate. We see that the FTP approach achieves this rate only by using a higher order rule for the other integral. Yet the Tables \ref{tab:ConvRatesSGneq1} and \ref{tab:ConvRatesSG=1} show that the optimal (main) rate can be achieved with any complimentary rule of at least equal order via STP quadrature with optimal $\sigma^*$. This performance is especially impressive if both formulas have the same order: For the same number of nodes the order of the main rate of STP is \textit{doubled} compared to FTP. This is also exactly the behavior which can be observed for traditional sparse grid quadrature, justifying the treatment of STP as a generalized version of SG quadrature.
	
		\subsection{Results for a two-dimensional test function}\label{NR test}
		We start our numerical experiments with a synthetic test function in order to demonstrate the general applicability of the suggested methods. To this end, let
		\begin{align}\label{DummyDef}
			\innerintegrand_\text{test}(z,u|\th):=u^{z+\th-1}e^{-u}
		\end{align}
		and $\OM=[0,1],\ \OMM=[0,\infty)$. Then 
		\begin{align*}
			\OQ(\th)
			&=\int_{0}^{1}\log\left(\int_{0}^{\infty}\innerintegrand_\text{test}(z,u|\th) du\right)dz=\int_{0}^{1}\log\left(\Gamma(z+\th)\right)dz\\
			&=-\log(\Gamma(\th))-\th+(\th-1)\log(\th)+\log(\Gamma(\th+1))+\frac{1}{2}\log(2\pi)\,,
		\end{align*}
		where $\Gamma$ denotes the Gamma function. For our computations we take $\th=4$.
		\\
		Although the general setup is intended for multidimensional integration, this simple example with one-dimensional domains $\OM$ and $\OMM$ already illustrates the improvements resulting from the STP approach. In particular, $\innerintegrand_\text{test}$ is smooth, so $\innerintegrand_\text{test}\in H^{r_1}(\OM)\times H^{r_2}(\OMM)$ for any $r_1,r_2>0$. This implies that every presented quadrature method achieves its maximal order of convergence in both, the inner and the outer integral, which allows for the direct comparison of the theoretical and the numerically observed rates.
		\\
		The choice $F(z,\th,t)=\log(t)$ resembles the Maximum Likelihood setup. Written in the general framework (\ref{IntGMM}) the estimator based on the log-likelihood is constructed with some function $\innerintegrand$ of an unobservable variable which integrates to the choice probability $P(z|\th)=\int_{\OMM}\innerintegrand(z,u|\th)du$. Theorems \ref{TPconvergence} and \ref{STPconvergence} required $F$ to be H\"older continuous in $t$ for all $z\in\OM$ and $\th\in\Theta$. The logarithm is in fact \textit{Lipschitz} continuous, i.e. H\"older continuous with constant $\alpha=1$, but only for $P(z|\th)\geq\delta>0$ for some small constant $\delta$. In an econometric context, it makes sense to assume that $P(z|\th)>0$ but the additional bound $\delta>0$ is less easy to justify. As ML-estimation also requires the integration over $z$, we will assume that such a bound can be prescribed by the choice of the search region $\th$. Additionally, the logarithm function is smooth in $(0,\infty)$, hence $F(z,\th,\cdot)\in H^{r_1}(\R)$ for any $r_1>0$, $z\in\OM$ and $\th\in\Theta$ and $F$ is obviously also smooth in $z$ and $\th$.
		\begin{figure}[!t]
			\centering
			\includegraphics[width=.9\textwidth, height=.8\columnwidth]{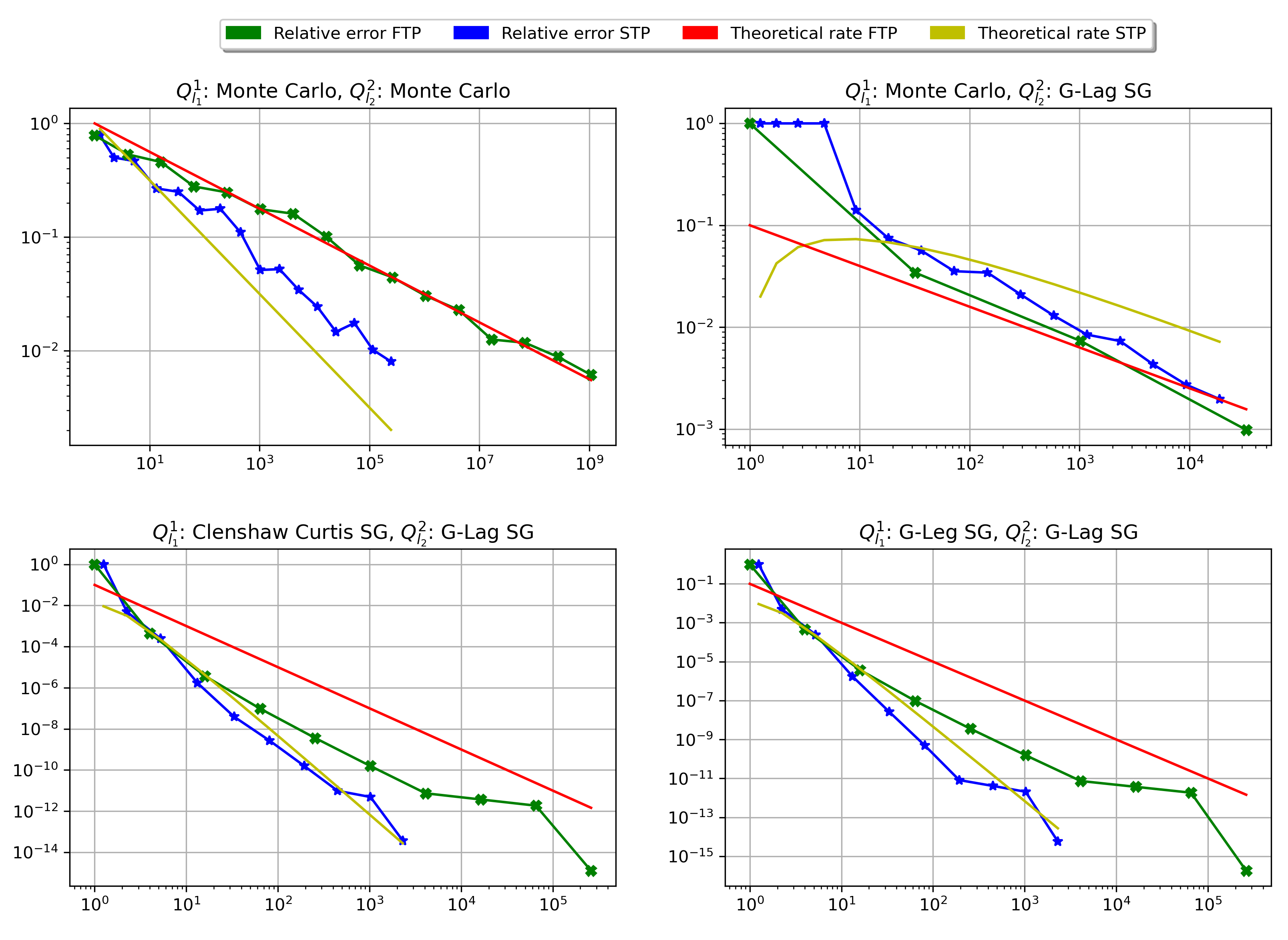}
			\caption[caption]{Full (FTP) and sparse (STP) tensor product quadrature for $F(z,t)=\log(t)$ and $\varphi_\text{test}$ with $\th=4$.
				General legend for all figures: "MC" = Monte Carlo quadrature, ``Sobol" = Quasi Monte Carlo quadrature with Sobol points, ``Halton" = Quasi Monte Carlo quadrature with Halton points, ``Frolov" = Frolov cubature, ``Clenshaw-Curtis SG" = Sparse grid quadrature for the Clenshaw-Curtis rule, ``G-Leg SG" = Sparse grid quadrature for the Gauss-Legendre rule, ``G-Lag SG" = Sparse grid quadrature for the Gauss-Laguerre rule, ``G-Herm SG" = Sparse grid quadrature for the Gauss-Hermite rule, ``Gen-Gauss SG" = Sparse grid quadrature with a generalized Gauss rule with polynomials in $-\log(1-x_i)$, i.e. based on the Gauss-Laguerre rule, ``Optimal weights" = optimal weights MC quadrature with kernel regularity $p_1=5$.}
			\label{fig:MLE-dummy1}
		\end{figure} 
		\\
		Figure \ref{fig:MLE-dummy1} presents four combinations of quadrature formulas for $\Qi$, $i=1,2$. We use for $\QuadI_{l_1}$ MC quadrature, the Clenshaw-Curtis and the Gauss-Legendre rule and for $\QuadII_{l_2}$ MC quadrature and the Gauss-Laguerre rule. The Clenshaw-Curtis and the Gauss-Legendre rule are linearly transformed from the interval $[-1,1]$ to the integration domain $[0,1]$, hence their approximation behavior remains unaffected. The Gauss-Laguerre rule is particularly suited for the inner integral since we actually integrate $u^{z+\th-1}$ with the weight function $\omega_2(u)=e^{-u}$. MC quadrature on $[0,\infty]$ can simply be applied via sampling from the exponential distribution. This way there is no transformation of the inner integral necessary.
		\\
		All plots on the left hand side display the expected better rate of STP versus FTP quadrature. The generally higher convergence of STP quadrature demonstrates the shift from $\gFGopt$ to $\gSGopt$ best. A similar result is obtained for the combination of the Gauss-Legendre rule with the Gauss-Laguerre rule, since both also achieve high rates on their own. In contrast, the combination of Monte Carlo integration with the Gauss-Laguerre rule supports the claim that the slow convergence of the MC rule can barely be ameliorated by the use of higher order formulas for the other integral.
			
		\subsection{Mixed Logit Likelihood} \label{NumExp:MixL}
		
		We continue with the integrals arising from Discrete Choice models encountered in section \ref{DCM}, which we now write in terms of the functional $\OG$ defined in (\ref{DoubleIntegral}), i.e. we specify the domains $\Qi$ for $i=1,2$, the functions $\innerintegrand$ and $F$ and the measures $\mu$ and $\nu$.
		\\
		For both estimators, the objective function $R_N$ denotes the approximation of an integral over the full data space from the real world $\OM$ via
		\begin{align*}
			G_{N_{1,l_1}}(\th)\approx G_\infty:=\int_{\OM}m(z|\th)d\nu(z)=\int_{\OM}F\left(z,\th,\int_{\OMM}\varphi(z,u|\th)d\mu(u)\right)d\nu(z)\,.
		\end{align*}
		While it might be easy to quantify the range of the data ($\OM\subset\R^{d_1}$) it is much harder to determine their distribution in $\OM$. In particular, we cannot choose the quadrature nodes $z_{n_1}^{l_1}$, $n_1=1,...,N_{1,l_1}$, for the approximation deterministically. Hence, the sampling of data points is inherently random and limits the choice of $\QuadI_{l_1}$ for the outer integral in (\ref{DoubleIntegral}) to quadrature methods which are based on random nodes. Therefore, we only consider Monte Carlo and optimal weights cubature for $\QuadI_{l_1}$ and combine them with low- (MC), medium- (Sobol) and high-order (SG or Frolov) rules for $\QuadII_{l_2}$. 
		\\
		We shortly want to point out that the index $n$ for $n=1,...,N$ as it was used in Section \ref{DCM} now changes to $n_1$ with $n_1=1,...,N_{1,l_1}$ in the setting for the approximation, since the summation over the observations is now considered an approximation $Q^1_{l_1}$ of the integral over the data space. Hence, we have now $z_{n_1}^{l_1}$ where we had $z_n$ in Section \ref{DCM}.
		\\
		We can now specify the Mixed Logit model by
		\begin{align*}
			\innerintegrand_\text{MixL}(z,u|\th)=\frac{e^{(zu)_i}}{\sum_{j=1}^{J}e^{(zu)_j}}
		\end{align*}
		and $F(z,\th,t)=\log(t)$. We let $\OM=[0,1]^{d_1}$, where $d_1=J\cdot q$, let $\nu$ be the uniform distribution and set $J=3$ and $q=4$. We consider a multivariate Gaussian distribution as mixing distribution for $u$, hence $\OMM=\R^4$, and $\mu$ resembles the corresponding c.d.f., so that $\Int_1$ denotes a 12-dimensional and $\Int_2$ denotes a 4-dimensional integral, respectively. Finally, we set the parameter vector $\th=(u_0,\Sigma)$ by letting the mean of the mixing distribution be $u_0=(0,0,0,0)\in\R^4$ and letting its covariance matrix $\Sigma=\Sigma_\rho$ be parameterized by $\rho=0.1$ where $\Sigma_{ii}=1$ and $\Sigma_{ij}=\rho$ \textcolor{commentcolor}{ for $i\neq j$}.
		\\
		Similar to the synthetic case from Subsection \ref{NR test}, $\innerintegrand_\text{MixL}$ is smooth. Thus, in theory any order of convergence could be obtained asymptotically. However, the possibly problematic issue in terms of $F$ for $P(z|\th)$ being very close to 0 remains.
		\\
		As the inner integral is defined on $\R^{d_2}$ with a multivariate normal density function, SG quadrature based on the Gauss-Hermite rule and MC quadrature with sampling from a multivariate normal distribution can be applied directly for the inner integral. Furthermore, using a proper \textit{tangens}-transformation, the integral can be transformed to $(0,1)^{d_2}$, so we can also use QMC rules like Sobol quadrature and Frolov cubature on the inner integral. The normal distribution has very flat tails and therefore cancels out any boundary singularities that would have been introduced by the tangens-transformation otherwise.		
		\begin{figure}[!t]
			\centering
			\includegraphics[width=.9\textwidth, height=1.2\columnwidth]{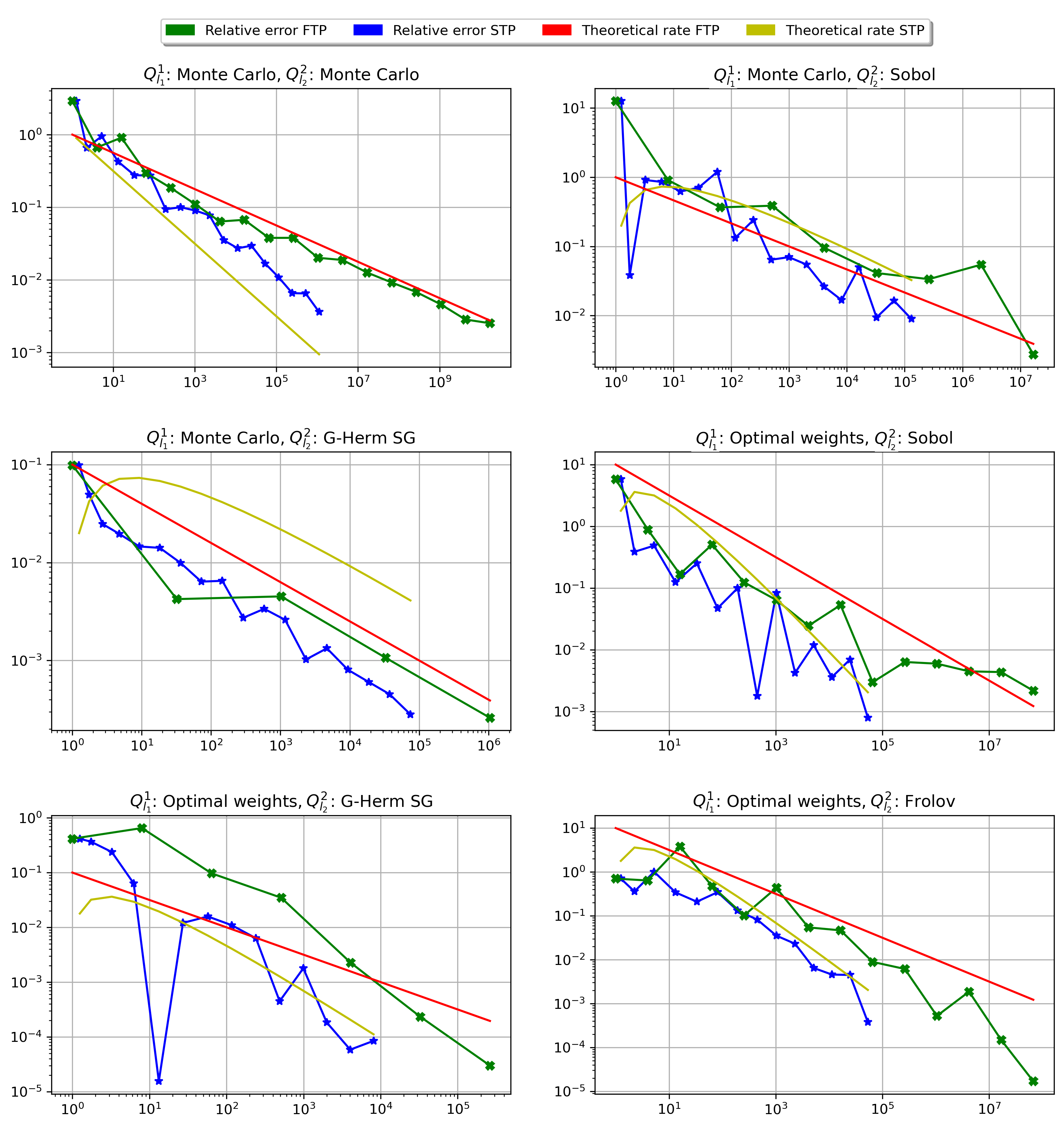}
			\caption{Full (FTP) and sparse (STP) tensor product quadrature for the Mixed Logit model estimated with Maximum Likelihood. (Legend as in Figure \ref{fig:MLE-dummy1})}
			\label{fig:MLE-Mixl}
		\end{figure}
		\\
		Figure \ref{fig:MLE-Mixl} now supports the claims made in Section \ref{MLE-theory}: If MC integration is used for $\QuadI_{l_1}$, then MC integration for $\QuadII_{l_2}$ also gives the highest improvement of the usage of STP compared with FTP. Moreover, a sparse grid rule for $\QuadII_{l_2}$ improves FTP and STP simultaneously. Similar results were obtained for optimal weights cubature where the difference between STP and FTP quadrature can be observed clearly for all combinations.
		\subsection{Multinomial Probit with a GMM estimator}
		
		Based on the seminal paper by McFadden \cite{McFadden1989}, we now investigate the double integral arising from the estimation of the Multinomial Probit model with a GMM-estimator. The associated moment function is defined in (\ref{GMM-momentfunction}).
		\\
		According to the definition (\ref{ProbitChoiceProb}) of $P_{n_1,j}(\th)$, the choice probability is given as the c.d.f. of a multivariate Gaussian distribution, so the derivatives $\nabla_\th P_{n_1,j}(\th)$ exist and can be derived via the corresponding p.d.f.. Furthermore $P_{n_1,j}(\th)$ needs to be computed only for the case $y_{n_1,j}=1$, so the approximation problem for this estimator boils down to the computation of one Multinomial Probit integral for each node/data point $z_{n_1}^{l_1}$.
		\\
		Yet the Multinomial Probit choice probability defined by (\ref{ProbitChoiceProb}) cannot be well approximated directly by the given quadrature rules since the kink introduced by the characteristic function in the integrand reduces the regularity of the integrand drastically. Therefore, higher-order quadrature rule cannot improve upon MC quadrature. But the Genz-algorithm \cite{genz2000}, which is equivalent to the GHK-simulator \cite{hajivassiliou1994classical}, \cite{keane1994solution}, transforms the integral to the unit cube
		\begin{align} \label{ProbitChoiceProbGenz}
			\Int(\mathbf{w})=\int_{(0,1)^{d_2}}\prod_{i=1}^{d_2+1}\hat{w}_i(u_1,...,u_{i-1})du
		\end{align}
		where $d_2=J-2$ for the number of choices $J$ and 
		\begin{align} \label{ProbitUtilities}
			\mathbf{w}=(\tilde{W}_{ki})_{i=1,i\neq k}^{J}=((z\th)_k-(z\th)_i)_{i=1,i\neq k}^{J}
		\end{align}
		for a fixed choice $k\in\{1,...,J\}$. The $\hat{w}_i$ are recursively defined by
		\begin{align*}
			\hat{w}_i(u_1,...,u_{i-1})=\Phi\left(C_{ii}^{-1}\cdot\left(\mathbf{w}_i-\sum_{j=1}^{i-1}C_{ij}\Phi^{-1}(u_j\hat{w}_j(u_1,...,u_{j-1}))\right)\right)
		\end{align*}
		for $i=1,...,J-1$. Here, $\Phi$ is again the c.d.f.~of the standard univariate Gaussian and $C$ is a factor from the Cholesky decomposition of $\Sigma$, i.e. $\Sigma=C^TC$. The inverse c.d.f. $\Phi^{-1}$ induces a boundary singularity for the integrand in (\ref{ProbitChoiceProbGenz}) but it is still analytic away from the boundary. 
		\\
		In the setting of tensor product integration, the definition (\ref{ProbitChoiceProbGenz}) now yields the inner integrand
		\begin{align*}
			\innerintegrand_\text{MNP}(z,u|\th)=\prod_{i=1}^{d_2+1}\hat{w}_i(u_1,...,u_{i-1})
		\end{align*}
		for the Multinomial Probit model. As intermediate function, we obtain
		\begin{align*}
			F(z_{n_1}^{l_1},\th,t)=\frac{1}{t}\nabla_\th P_{n_1,j(n_1)}(\th)-1
		\end{align*}
		based on (\ref{GMM-momentfunction}) and the fact that $y_{n_1,j}\neq0$ only for one alternative $j$, which we denote by $j(n_1)$. 
		\\
		We let again $\OM=[0,1]^{d_1}$ with $d_1=J\cdot q$, we let $\nu$ be the uniform distribution and set $J=5$ and $q=3$, so $\Int_1$ denotes a 15-dimensional integral. The transformed integral (\ref{ProbitChoiceProbGenz}) is defined on the domain $\OMM=(0,1)^{d_2}$ where $d_2=J-2$ and $\mu$ is the corresponding Lebesgue measure, so the inner integral is 3-dimensional. We set $\th=(1,1,1)\in\R^3$ and use the covariance matrix $\Sigma=\Sigma_{0.1}$ similar to the previous subsection. Again, $\innerintegrand_\text{MNP}$ is smooth, so any quadrature formula should achieve its best rate. $F$ is Lipschitz if $t$ is bounded away from 0, thus the conditions of Theorems \ref{TPconvergence} and \ref{STPconvergence} are met.
		\begin{figure}[!t]
			\centering
			\includegraphics[width=.9\textwidth, height=1.2\columnwidth]{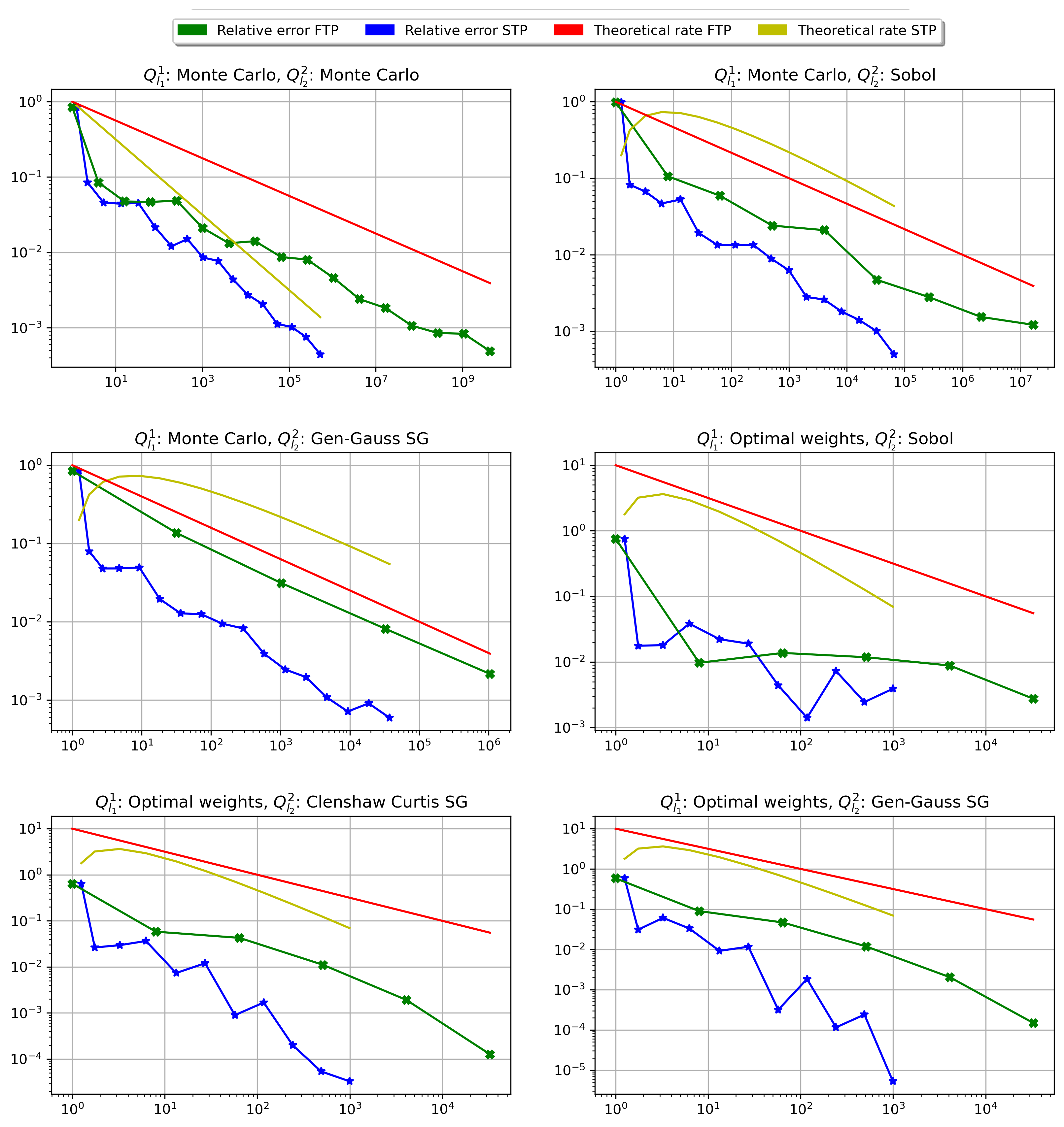}
			\caption{Full (FTP) and sparse (STP) tensor product quadrature for the Multinomial Probit model estimated with GMM. (Legend as in Figure \ref{fig:MLE-dummy1})}
			\label{fig:MLE-MNP}
		\end{figure}
		\\
		In contrast to Subsection \ref{NumExp:MixL}, in this subsection both integration domains are bounded, hence all of the proposed quadrature rules can be applied directly. As before, the Clenshaw-Curtis rule is transformed linearly to $[-1,1]$. However, the Genz transformation still introduces a boundary singularity reducing the regularity of the inner integrand. Thus, we compare not only standard quadrature rules but also apply SG quadrature based on a generalized Gaussian rule with $\psi(u)=-\log(1-u)$ according to the definition above and in \cite{GriebelOettershagen2014}.
		\\
		We see in Figure \ref{fig:MLE-MNP} that STP clearly outperforms FTP for all combinations of MC or optimal weight cubature with low and high order formulas for $\QuadII_{l_2}$. In particular, for Monte Carlo integration for $\QuadI_{l_1}$, STP and FTP follow the expected rates closely, similar to the above case of Mixed Logit/Maximum likelihood. Only the combination of optimal weights with Frolov cubature fails due to the bad performance of the latter which is caused by the non-zero \textcolor{commentcolor}{boundary trace} of the integrand.
		\begin{figure}[!t]
			\centering
			\includegraphics[width=.9\linewidth, height=.8\columnwidth]{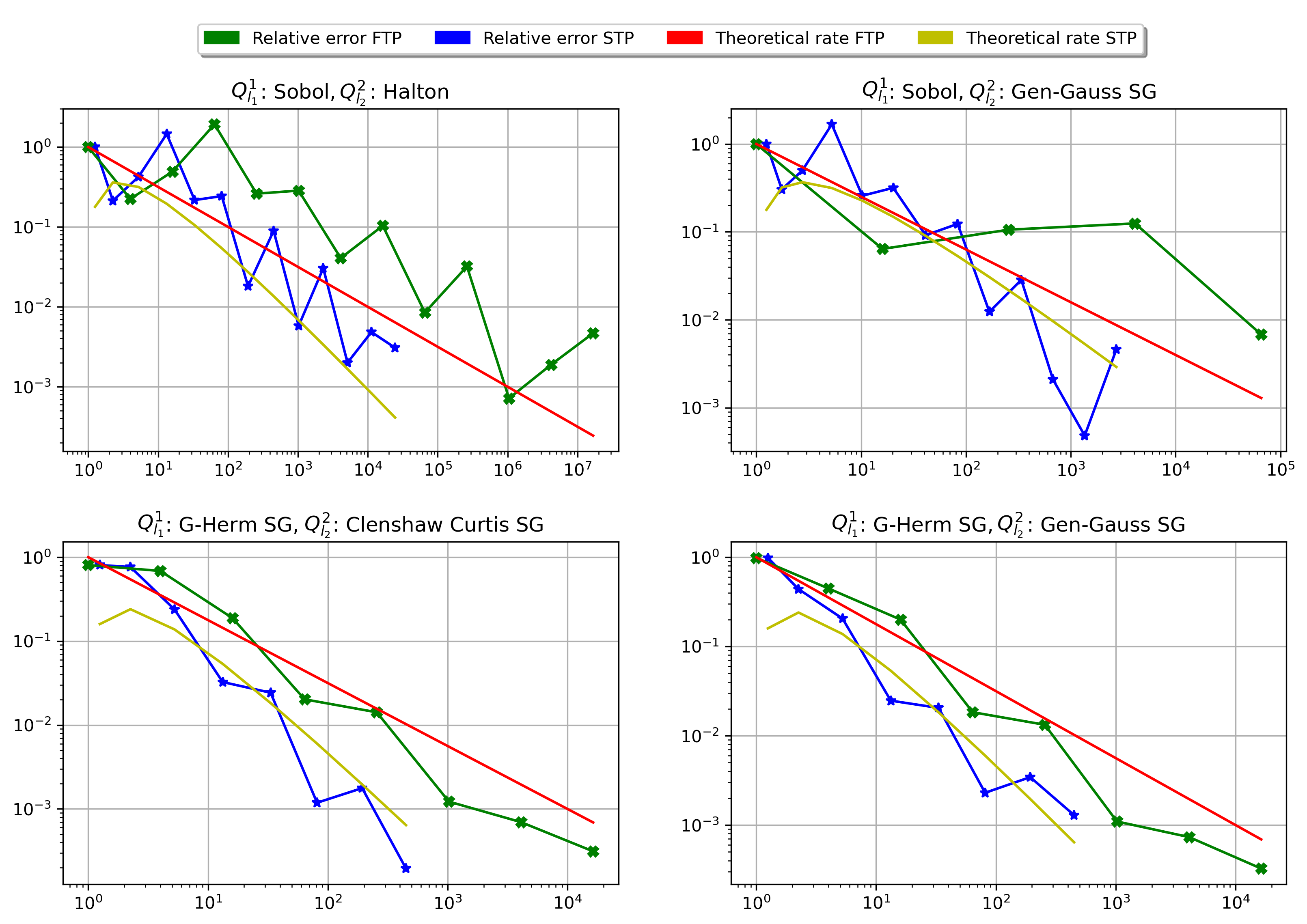}
			\caption{Full (FTP) and sparse (STP) tensor product quadrature for the Mixed Multinomial Probit model. (Legend as in Figure \ref{fig:MLE-dummy1})}
			\label{fig:MLE-MixMNP}
		\end{figure}
		
		\subsection{Mixed Probit}
		
		As final example, we recall the Mixed Probit model from Section \ref{DCM}: There we noticed that, although the multivariate Probit model already allows for correlation between choices, a mixture distribution might be superior in some cases. However, a Mixed Probit model is computationally even more challenging, since it involves not only the approximation of a multivariate Gaussian distribution but also an approximation of the integral over the parameter mixture. In particular, the multivariate Gaussian has to be calculated at every quadrature node for the \textcolor{commentcolor}{integral over the mixing distribution}. 
		\\
		Hence, we have again two nested integrals for which we can compare FTP and STP quadrature. The inner integrand remains $\innerintegrand_\text{MNP}$ (since the Multinomial Probit choice probability is the basis for the Mixed Probit model) with covariance matrix $\Sigma=\Sigma_{0.2}$ and similar specification of $\OMM$ and $\mu$. But in the computation of the utility $V$, the roles of integration and fixed variable are interchanged: We now integrate over $\th$ and fix a set of observed variables $z$, thus, in the established notation, we have $``z"=``\th"$ and $z$ is a value in the new parameter vector $\th$. This also changes the dimensionality of $\OM$ from $q\cdot J$ to $q$. We draw $q\cdot J$ values randomly from a uniform distribution to assemble the new parameter $z\in\R^{q\times J}$ and set $q=4$ and $J=5$ obtaining 3-dimensional inner and 4-dimensional outer integrals, respectively.
		\\
		Furthermore, the intermediate function $F$ becomes
		\begin{align*}
			F(z,\th,t)=t\cdot h(z|\th_0,\Xi)
		\end{align*}
		for a mixing distribution $h$ with mean $\th_0$ and covariance matrix $\Xi$. As for the Mixed Logit model, we let $h$ be a multivariate Gaussian distribution, set $\th_0=(0.2,0.2,0.2,0.2)\in\R^4$ and $\Xi=\Sigma_{0.1}$, and have the complete parameter vector $(z,\Sigma,\th_0,\Xi)$.
		\\
		For the inner integral, the same quadrature rules are available as in the previous section (with the same linear transformation for the Clenshaw-Curtis rule). Yet we are no longer restricted to MC and optimal weights cubature for the outer integral as the integration is now \textcolor{commentcolor}{also induced by the model and not by the simulation}. Therefore, we can test STP quadrature for higher-order rules. As the outer integral is defined on $\OM=\R^{d_1}$ with a multivariate normal density function, the same restrictions apply as for the inner integral in Section \ref{NumExp:MixL}, i.e. we can simply use SG quadrature based on the Gauss-Hermite rule or any of the other rules if we first apply a tangens-transformation for the outer integral.
		\\
		We display the resulting improvements in Figure \ref{fig:MLE-MixMNP}. Once more, the results underscore the predictions we made in Tables \ref{tab:ConvRatesFG},\ref{tab:ConvRatesSGneq1} and \ref{tab:ConvRatesSG=1} as STP clearly outperforms FTP quadrature. Furthermore, we see how the benefits are more visible if same-order rules are used for $\QuadI_{l_1}$ and $\QuadII_{l_2}$ and we can observe how the high order of SG quadrature is sustained.
		
	\section{Concluding remarks} \label{Conclusion}
		In the present paper, we adapted the sparse tensor product (STP) technique for integration problems, proved corresponding theorems on error bounds and derived the optimal balancing factor. In particular, we proposed a H\"older continuity condition for the linking function to preserve error bounds. It turned out that the improvements of STP quadrature compared with the classic balancing approach between inner and outer integral (FTP quadrature) are most significant if the rules used for the inner and outer integral achieve similar convergence rates. Then, the order of the total rate is almost doubled for STP quadrature and almost equals the rate of each separate formula.
		\\
		We then presented two popular models from Discrete Choice modeling which both require the approximation of a multi-dimensional integral. Furthermore, we discovered that M- and GMM-estimators can be considered as Monte Carlo simulations of integrals over the domain of the observable variables, i.e. the ``real world data" space. Together with the integrals posed by the respective models and objective functions they comprise nested integrals separated by an intermediate function.		
		\\
		Introducing STP quadrature to these nested integrals, we combined the approximation of inner and outer integral to obtain significantly improved approximations of Maximum Likelihood- and GMM-estimators, where Mixed Logit and Multinomial Probit models served as examples for models with multidimensional integrals which are not analytically solvable. For both instances, as well as a synthetic test function, the proposed STP quadrature approach was considerably better than the standard approach, achieving up to twice the original order of convergence. Finally, for the Mixed Multinomial Probit model, which directly included a nested integral, STP quadrature was similarly effective.
		\\
		We conclude that econometric estimation and nested integrals arising from econometric models can significantly benefit from using sparse tensor product quadrature. In estimation, which we simulated using MC sampling in the domain $\OM$, this enables us to reach the best possible main rate $O(N^{-1/2})$ for any rule $\QuadII_{l_2}$ and hence to increase the accuracy of an ML- or GMM-estimator for a fixed set of observations. For nested integrals as in the Mixed Probit model, we preserve polynomial (and possibly even exponential) convergence rates and make intractable models numerically feasible.
		\\
		\textcolor{commentcolor}{For applications where {\em both} integrals are induced by the model (and not one by simulation and one by the model), the regularity of the integrands is decisive for the overall achievable convergence rate. It may happen that the advantage of higher-order quadrature rules gets eliminated if the integrands are not sufficiently smooth. Depending on the model, the parameter $\th$ might also affect the convergence behavior of the inner and outer quadratures, e.g. certain choices of $\th$ might introduce singularities or kinks to the integrands or just might move them closer to such irregularities.}
		\\
		\textcolor{commentcolor}{Eventhough the sparse tensor product effect requires little regularity of the integrands $\innerintegrand$ and $F$, convergence rates higher than $O(N^{-1/2})$ can in general not be achieved even for smooth functions, if the outer integral is approximated only by real world observations. Then, as the quadrature points for the outer integral are chosen at \textit{random} in this setting, just the convergence theory for MC sampling is applicable. Altogether this leads to the best possible rate $O(N^{-1/2})$ even for an analytical inner integrand.}
		\\
		It is to be explored in the future if and how estimation in $\OM$ can be enhanced by the weights from optimally weighted MC and the estimation thus can benefit from its higher rate. Altogether this would result in much faster convergence for both, FTP and STP, while still keeping the advantage of STP over FTP of a doubled rate which we demonstrated in this article.
		\bigbreak
		\textbf{Acknowledgments.} Michael Griebel was supported by the \textit{Hausdorff Center for Mathematics} in Bonn, funded by the Deutsche Forschungsgemeinschaft (DFG, German Research Foundation) under Germany's Excellence Strategy - EXC-2047/1 - 390685813 and the Sonderforschungsbereich 1060 \textit{The Mathematics of Emergent Effects} of the Deutsche Forschungsgemeinschaft. 
		\bigbreak

		\printbibliography
\end{document}